\documentclass[11pt]{article}
\usepackage{epsf}
\usepackage{amsmath}
\usepackage{epsfig}
\usepackage{times}
\usepackage{amssymb}
\usepackage{amsthm}
\usepackage{setspace}
\usepackage{cite}

\usepackage{algorithmic}  
\usepackage{algorithm}

\usepackage{shadow}
\usepackage{fancybox}
\usepackage{fancyhdr}

\def\w{{\bf w}}

\def\y{{\bf y}}

\def\x{{\bf x}}

\def\x{{\mathbf x}}

\def\w{{\bf w}}

\def\x{{\bf x}}
\def\y{{\bf y}}
\def\z{{\bf z}}

\def\h{{\bf h}}

\def\be{\begin{equation}}
\def\ee{\end{equation}}
\def\ba{\left[\begin{array}}
\def\ea{\end{array}\right]}

\def\t{{\bf t}}

\def\w{{\bf w}}

\def\x{{\bf x}}
\def\y{{\bf y}}
\def\z{{\bf z}}

\def\1{{\bf 1}}

\def\0{{\bf 0}}


\def\erfinv{\mbox{erfinv}}
\def\htheta{\hat{\theta}}


\def\hw{\bar{\h}}
\def\hwh{\tilde{\h}}

\def\cweak{c_{w}}


\def\Swp{S_w^{(p)}}
\def\Swh{S_w^{(h)}}



\newtheorem{theorem}{Theorem}

\newtheorem{lemma}{Lemma}

\setlength{\oddsidemargin}{0in} \setlength{\evensidemargin}{0in}
\setlength{\textwidth}{6.5in} 
\setlength{\textheight}{9in} 
\setlength{\topmargin}{-0.25in}

\begin{document}

\begin{singlespace}

\title {Towards a better compressed sensing 
\footnote{ This work was supported in part by NSF grant \#CCF-1217857.}
}
\author{
\textsc{Mihailo Stojnic}
\\
\\
{School of Industrial Engineering}\\
{Purdue University, West Lafayette, IN 47907} \\
{e-mail: {\tt mstojnic@purdue.edu}} }
\date{}
\maketitle

\centerline{{\bf Abstract}} \vspace*{0.1in}

In this paper we look at a well known linear inverse problem that is one of the mathematical cornerstones of the compressed sensing field. In seminal works \cite{CRT,DOnoho06CS} $\ell_1$ optimization and its success when used for recovering sparse solutions of linear inverse problems was considered. Moreover, \cite{CRT,DOnoho06CS} established for the first time in a statistical context that an unknown vector of linear sparsity can be recovered as a known existing solution of an under-determined linear system through $\ell_1$ optimization. In \cite{DonohoPol,DonohoUnsigned} (and later in \cite{StojnicCSetam09,StojnicUpper10}) the precise values of the linear proportionality were established as well. While the typical $\ell_1$ optimization behavior has been  essentially settled through the work of \cite{DonohoPol,DonohoUnsigned,StojnicCSetam09,StojnicUpper10}, we in this paper look at possible upgrades of $\ell_1$ optimization. Namely, we look at a couple of algorithms that turn out to be capable of recovering a substantially higher sparsity than the $\ell_1$. However, these algorithms assume a bit of ``feedback" to be able to work at full strength. This in turn then translates the original problem of improving upon $\ell_1$ to designing algorithms that would be able to provide output needed to feed the $\ell_1$ upgrades considered in this papers.

\vspace*{0.25in} \noindent {\bf Index Terms: Compressed sensing; $\ell_1$ optimization; linear systems of equations;
$\ell_1$-optimization}.

\end{singlespace}

\section{Introduction}
\label{sec:back}

We start by looking at the mathematical description of the linear inverse problems of interest in this paper. Namely, these problems will essentially be under-determined systems of linear equations that are known to have sparse solutions. These problems are one of the mathematical cornerstones of a very popular compressed sensing field (of course a great deal of work has been done in the compressed sensing; instead of reviewing it here we for more on compressed sensing ideas refer to the introductory papers \cite{CRT,DOnoho06CS}). As such they are consequently one of the subjects of consideration in almost any of the papers related to compressed sensing. A series of our own recent work \cite{StojnicCSetam09,StojnicICASSP09,StojnicUpper10} is of course no an exception. What is typically intriguing about these problems is the simplicity of their statements.

To insure that we are on a right mathematical track we will along these lines start with providing their an as simple as possible description. One typically starts with a systems matrix $A$ which is an $m\times n$ ($m\leq n$) dimensional matrix with real entries and then considers an $n$ dimensional vector $\tilde{\x}$ that also has real entries but on top of that no more than $k$ nonzero entries (in the rest of this paper we will call such a vector $k$-sparse). Then one forms the product of $A$ and $\tilde{\x}$ to obtain $\y$
\begin{equation}
\y=A\tilde{\x}. \label{eq:defy}
\end{equation}
Clearly, in general $\y$ is an $m$ dimensional vector with real entries. Then, for a moment one pretends that $\tilde{\x}$ is not known and poses the following inverse problem: given $A$ and $\y$ from (\ref{eq:defy}) can one then determine $\tilde{\x}$? Or in other words, can one for a given pair $A$ and $\y$ find the $k$ sparse solution of the following linear systems of equation type of problem (see, Figure \ref{fig:model})
\begin{equation}
A\x=\y. \label{eq:system}
\end{equation}
\begin{figure}[htb]
\centering
\centerline{\epsfig{figure=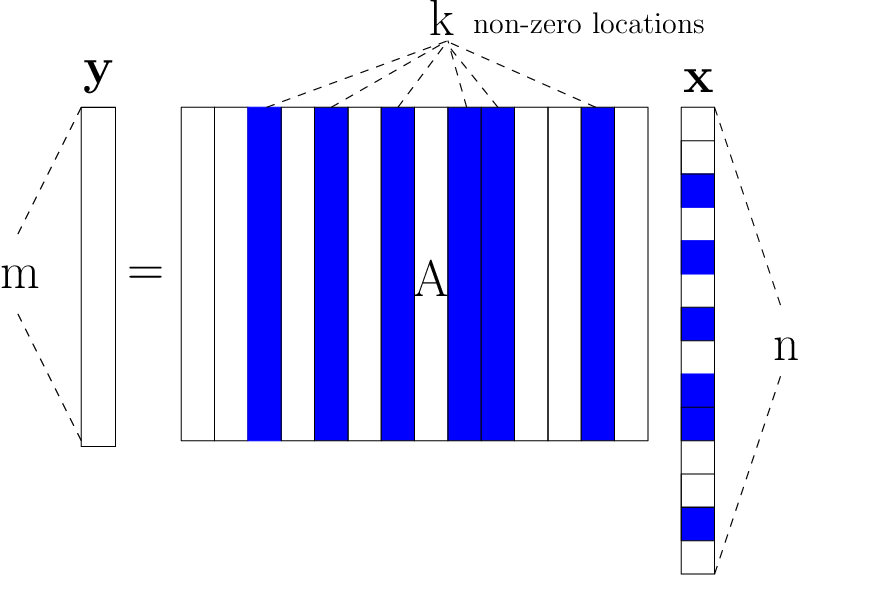,width=10.5cm,height=6cm}}
\caption{Model of a linear system; vector $\x$ is $k$-sparse}
\label{fig:model}
\end{figure}
Of course, based on (\ref{eq:defy}) such an $\x$ exists (moreover, it is an easy algebraic exercise to show that when $k<m/2$ it is in fact unique). Additionally, we will assume that there is no $\x$ in (\ref{eq:system}) that is less than $k$ sparse. One often (especially within the compressed sensing context) rewrites the problem described above (and given in (\ref{eq:system})) in the following way
\begin{eqnarray}
\mbox{min} & & \|\x\|_{0}\nonumber \\
\mbox{subject to} & & A\x=\y, \label{eq:l0}
\end{eqnarray}
where $\|\x\|_{0}$ is what is typically called $\ell_0$ norm of vector $\x$. For all practical purposes we will view $\|\x\|_{0}$ as the number that counts how many nonzero entries $\x$ has.

To make writing in the rest of the paper easier, we will assume the
so-called \emph{linear} regime, i.e. we will assume that $k=\beta n$
and that the number of equations is $m=\alpha n$ where
$\alpha$ and $\beta$ are constants independent of $n$ (more
on the non-linear regime, i.e. on the regime when $m$ is larger than
linearly proportional to $k$ can be found in e.g.
\cite{CoMu05,GiStTrVe06,GiStTrVe07}). Of course, we do mention that all of our results can easily be adapted to various nonlinear regimes as well.

Looking back at (\ref{eq:system}), clearly one can consider an exhaustive search type of solution where one would look at all subsets of $k$ columns of $A$ and then attempt to solve the resulting system. However, in the linear regime that we assumed above such an approach becomes prohibitively slow as $n$ grows. That of course led in last several decades towards a search for more clever algorithms for solving (\ref{eq:system}). Many great algorithms were developed (especially during the last decade) and many of them have even provably excellent performance measures (see, e.g. \cite{JATGomp,JAT,NeVe07,DTDSomp,NT08,DaiMil08,DonMalMon09}).

A particularly successful technique for solving (\ref{eq:system}) that will be of our interest in this paper is a linear programming relaxation of (\ref{eq:l0}), called $\ell_1$-optimization. (Variations of the standard $\ell_1$-optimization from e.g.
\cite{CWBreweighted,SChretien08,SaZh08}) as well as those from \cite{SCY08,FL08,GN03,GN04,GN07,DG08,StojnicLqThrBnds10,StojnicLiftLqThrBnds13} related to $\ell_q$-optimization, $0<q<1$
are possible as well.) Basic $\ell_1$-optimization algorithm finds $\x$ in
(\ref{eq:system}) or (\ref{eq:l0}) by solving the following $\ell_1$-norm minimization problem
\begin{eqnarray}
\mbox{min} & & \|\x\|_{1}\nonumber \\
\mbox{subject to} & & A\x=\y. \label{eq:l1}
\end{eqnarray}
If one looks at $\|\x\|_{q},0\leq q\leq 1$, then clearly for two limiting values of $q$ one obtains either (\ref{eq:l0}) or (\ref{eq:l1}). On the other hand, one would be tempted to believe that as $q$ moves from $0$ and starts increasing towards $1$, $\|\x\|_{q}$ deviates more and more from the desired objective given in (\ref{eq:l0}) and approaches closer and closer towards the objective given in (\ref{eq:l1}). The reason why one is typically interested in value $q=1$ (and consecutively in taking $\|\x\|_{1}$ as the objective in (\ref{eq:l1})) is because in that case the resulting optimization problem given in (\ref{eq:l1}) is known to be solvable in polynomial time.

Due to its popularity the literature on the use of the above algorithm is rapidly growing. We below restrict our attention to two, in our mind, the most influential works that relate to (\ref{eq:l1}).

The first one is \cite{CRT} where the authors were able to show that if
$\alpha$ and $n$ are given, $A$ is given and satisfies the restricted isometry property (RIP) (more on this property the interested reader can find in e.g. \cite{Crip,CRT,Bar,Ver,ALPTJ09,BCTsharp09,BCTdecay09,BCTT09,BT10,StojnicRicBnds13}), then
any unknown vector $\tilde{\x}$ with no more than $k=\beta n$ (where $\beta$
is a constant dependent on $\alpha$ and explicitly
calculated in \cite{CRT}) non-zero elements can be recovered by
solving (\ref{eq:l1}). As earlier, this assumes that $\y$ in (\ref{eq:l1}) was in
fact generated by that $\tilde{\x}$ (in fact, to be more accurate, by the product $A\tilde{\x}$) and given to us.

However, the RIP is only a \emph{sufficient}
condition for $\ell_1$-optimization to produce the $k$-sparse solution of
(\ref{eq:system}). Instead of characterizing $A$ through the RIP
condition, in \cite{DonohoUnsigned,DonohoPol} Donoho looked at its geometric properties/potential. Namely,
in \cite{DonohoUnsigned,DonohoPol} Donoho considered the polytope obtained by
projecting the regular $n$-dimensional cross-polytope $C_p^n$ by $A$. He then established that
the solution of (\ref{eq:l1}) will be the $k$-sparse solution of
(\ref{eq:system}) if and only if
$AC_p^n$ is centrally $k$-neighborly
(for the definitions of neighborliness, details of Donoho's approach, and related results the interested reader can consult now already classic references \cite{DonohoUnsigned,DonohoPol,DonohoSigned,DT}). In a nutshell, using the results
of \cite{PMM,AS,BorockyHenk,Ruben,VS}, it is shown in
\cite{DonohoPol}, that if $A$ is a random $m\times n$
ortho-projector matrix then with overwhelming probability $AC_p^n$ is centrally $k$-neighborly (as usual, under overwhelming probability we in this paper assume
a probability that is no more than a number exponentially decaying in $n$ away from $1$). Miraculously, \cite{DonohoPol,DonohoUnsigned} provided a precise characterization of $m$ and $k$ (in a large dimensional context) for which this happens.

In a series of our own work (see, e.g. \cite{StojnicICASSP09,StojnicCSetam09,StojnicUpper10}) we then created an alternative probabilistic approach which was also capable of providing the precise characterization between $m$ and $k$ that guarantees success/failure of (\ref{eq:l1}) when used for finding the $k$-sparse solution of (\ref{eq:system}). The approach was a combination of geometric and purely probabilistic ideas. The following theorem summarizes the results we obtained in e.g. \cite{StojnicICASSP09,StojnicCSetam09,StojnicUpper10,StojnicICASSP10var}.

\begin{theorem}(Exact threshold)
Let $A$ be an $m\times n$ matrix in (\ref{eq:system})
with i.i.d. standard normal components. Let
the unknown $\x$ in (\ref{eq:system}) be $k$-sparse. Further, let the location and signs of nonzero elements of $\x$ be arbitrarily chosen but fixed.
Let $k,m,n$ be large
and let $\alpha=\frac{m}{n}$ and $\beta_w=\frac{k}{n}$ be constants
independent of $m$ and $n$. Let $\erfinv$ be the inverse of the standard error function associated with zero-mean unit variance Gaussian random variable.  Further,
let all $\epsilon$'s below be arbitrarily small constants.
\begin{enumerate}
\item Let $\htheta_w$, ($\beta_w\leq \htheta_w\leq 1$) be the solution of
\begin{equation}
(1-\epsilon_{1}^{(c)})(1-\beta_w)\frac{\sqrt{\frac{2}{\pi}}e^{-(\erfinv(\frac{1-\theta_w}{1-\beta_w}))^2}}{\theta_w}-\sqrt{2}\erfinv ((1+\epsilon_{1}^{(c)})\frac{1-\theta_w}{1-\beta_w})=0.\label{eq:thmweaktheta}
\end{equation}
If $\alpha$ and $\beta_w$ further satisfy
\begin{equation}
\alpha>\frac{1-\beta_w}{\sqrt{2\pi}}\left (\sqrt{2\pi}+2\frac{\sqrt{2(\erfinv(\frac{1-\htheta_w}{1-\beta_w}))^2}}{e^{(\erfinv(\frac{1-\htheta_w}{1-\beta_w}))^2}}-\sqrt{2\pi}
\frac{1-\htheta_w}{1-\beta_w}\right )+\beta_w
-\frac{\left ((1-\beta_w)\sqrt{\frac{2}{\pi}}e^{-(\erfinv(\frac{1-\hat{\theta}_w}{1-\beta_w}))^2}\right )^2}{\hat{\theta}_w}\label{eq:thmweakalpha}
\end{equation}
then with overwhelming probability the solution of (\ref{eq:l1}) is the $k$-sparse $\x$ from (\ref{eq:system}).
\item Let $\htheta_w$, ($\beta_w\leq \htheta_w\leq 1$) be the solution of
\begin{equation}
(1+\epsilon_{2}^{(c)})(1-\beta_w)\frac{\sqrt{\frac{2}{\pi}}e^{-(\erfinv(\frac{1-\theta_w}{1-\beta_w}))^2}}{\theta_w}-\sqrt{2}\erfinv ((1-\epsilon_{2}^{(c)})\frac{1-\theta_w}{1-\beta_w})=0.\label{eq:thmweaktheta1}
\end{equation}
If on the other hand $\alpha$ and $\beta_w$ satisfy
\begin{multline}
\hspace{-.5in}\alpha<\frac{1}{(1+\epsilon_{1}^{(m)})^2}\left ((1-\epsilon_{1}^{(g)})(\htheta_w+\frac{2(1-\beta_w)}{\sqrt{2\pi}} \frac{\sqrt{2(\erfinv(\frac{1-\htheta_w}{1-\beta_w}))^2}}{e^{(\erfinv(\frac{1-\htheta_w}{1-\beta_w}))^2}})
-\frac{\left ((1-\beta_w)\sqrt{\frac{2}{\pi}}e^{-(\erfinv(\frac{1-\hat{\theta}_w}{1-\beta_w}))^2}\right )^2}{\hat{\theta}_w(1+\epsilon_{3}^{(g)})^{-2}}\right )\label{eq:thmweakalpha}
\end{multline}
then with overwhelming probability there will be a $k$-sparse $\x$ (from a set of $\x$'s with fixed locations and signs of nonzero components) that satisfies (\ref{eq:system}) and is \textbf{not} the solution of (\ref{eq:l1}).
\end{enumerate}
\label{thm:thmweakthr}
\end{theorem}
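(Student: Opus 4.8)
The plan is to convert success and failure of (\ref{eq:l1}) into a statement about whether a uniformly random subspace avoids a fixed cone section of the sphere, and then to control that event with a Gordon-type comparison inequality whose governing parameter is a Gaussian width. First I would use the null-space characterization: with the support $\mathcal S$ of $\tilde{\x}$ and the signs of its nonzero entries held fixed, $\tilde{\x}$ is the unique solution of (\ref{eq:l1}) iff $\sum_{i\notin\mathcal S}|w_i|>-\sum_{i\in\mathcal S}\mathrm{sign}(\tilde{x}_i)w_i$ for every nonzero $\w$ with $A\w=\0$; the ``if'' is convexity of $\|\cdot\|_1$ and the ``only if'' comes from testing $\tilde{\x}+t\w$ with $t\downarrow 0$. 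Since $A$ has i.i.d. standard normal entries, its law is invariant under signed coordinate permutations, so without loss of generality $\mathcal S$ is the last $k$ coordinates with all signs negative, and the obstruction set on the unit sphere becomes $S_w=\{\w:\|\w\|_2=1,\ \sum_{i=1}^{n-k}|w_i|\le\sum_{i=n-k+1}^{n}w_i\}$. Then (\ref{eq:l1}) recovers $\tilde{\x}$ exactly when $\mathrm{null}(A)\cap S_w=\emptyset$, while any point of $\mathrm{null}(A)\cap S_w$ yields a $k$-sparse $\x$ with the prescribed locations and signs that defeats (\ref{eq:l1}) --- which is precisely what part 2 asserts.

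Next I would set $\xi(A)=\min_{\w\in S_w}\|A\w\|_2$, so that $\mathrm{null}(A)\cap S_w=\emptyset\iff\xi(A)>0$, and invoke Gordon's escape-through-the-mesh inequality $\mathbb E\,\xi(A)\ge\mathbb E\|\g\|_2-w(S_w)$, where $\g\in\mathbb R^m$ is standard normal and $w(S_w)=\mathbb E\,\sup_{\w\in S_w}\langle\h,\w\rangle$ with $\h\in\mathbb R^n$ standard normal. Because $\mathbb E\|\g\|_2=(1-o(1))\sqrt m$ and $\xi$ is a $1$-Lipschitz function of the Gaussian matrix, Gaussian concentration upgrades ``$\mathbb E\,\xi(A)>0$'' to ``$\xi(A)>0$ with overwhelming probability'' as soon as $w(S_w)<(1-\epsilon)\sqrt m$, i.e. as soon as $\alpha>w(S_w)^2/n$ up to an arbitrarily small slack; this is part 1. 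For part 2 I would pass to the dual description: writing $\mathrm{null}(A)$ through a random orthonormal basis $N$, the set $\mathrm{null}(A)\cap S_w$ is nonempty iff an infimum over the box $\{\|\u\|_\infty\le1\}$ of $\|N^T\v(\u)\|_2$ (with $\v(\u)$ the vector whose free block is $\u$ and whose support block is $-\1$) is strictly positive --- equivalently iff no dual certificate for $\ell_1$ exists --- and a second application of Gordon's inequality to this dual problem shows that infimum is strictly positive with overwhelming probability exactly when $w(S_w)>(1+\epsilon)\sqrt m$, i.e. when $\alpha<w(S_w)^2/n$. Thus both halves of the theorem reduce to evaluating $w(S_w)$ and comparing $w(S_w)^2$ with $m=\alpha n$, the $\epsilon$'s in the statement absorbing the Lipschitz and law-of-large-numbers slack; and one checks by elementary algebra that the two $\alpha$-conditions (\ref{eq:thmweakalpha}) coincide modulo those $\epsilon$'s.

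The heart of the proof is the width evaluation. Dualizing the constraint $\sum_{i=1}^{n-k}|w_i|\le\sum_{i=n-k+1}^{n}w_i$ with a multiplier $\lambda\ge0$ and exchanging the sup over $\w$ with the inf over $\lambda$ (a minimax justified by concavity in $\w$), the inner maximization over $\|\w\|_2=1$ reduces to the Euclidean norm of the vector whose free entries are the soft-thresholded $\mathrm{sign}(h_i)(|h_i|-\lambda)_+$ and whose support entries are the shifted $h_i+\lambda$, so that $w(S_w)=\mathbb E\,\inf_{\lambda\ge0}\big(\sum_{i=1}^{n-k}(|h_i|-\lambda)_+^2+\sum_{i=n-k+1}^{n}(h_i+\lambda)^2\big)^{1/2}$. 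The number of free coordinates that survive the threshold concentrates, by the law of large numbers, around $(n-k)\Pr(|h|>\lambda)$; introducing the total active fraction $\htheta_w=\beta_w+(1-\beta_w)\Pr(|h|>\lambda)$ and writing the threshold as $\lambda=\sqrt2\,\erfinv(\tfrac{1-\htheta_w}{1-\beta_w})$ turns the optimality-in-$\lambda$ condition into exactly (\ref{eq:thmweaktheta}) --- or into (\ref{eq:thmweaktheta1}) on the failure side, where the direction of the $\epsilon$-perturbation records which way the concentration must break. Replacing each coordinatewise sum by its Gaussian mean --- the elementary moments $\mathbb E(|h|-\lambda)_+^2$ and $\mathbb E(h+\lambda)^2=1+\lambda^2$, expressible through $e^{-\lambda^2/2}$, $\lambda$ and $\Pr(|h|>\lambda)$ --- collapses $w(S_w)^2/n$ to the bracketed quantity in (\ref{eq:thmweakalpha}) (which, once the stationarity relation (\ref{eq:thmweaktheta}) is used, reduces further to $\htheta_w$). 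Imposing $w(S_w)^2<m$ then gives the sufficient condition of part 1 and $w(S_w)^2>m$ gives the failure condition of part 2.

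The main obstacle is carrying out this last step uniformly and with error control exponential in $n$. One must: (i) justify the sup--inf exchange and show the optimal multiplier $\lambda$ is itself sharply concentrated; (ii) prove that all the empirical coordinatewise averages above --- the count of active coordinates together with each Gaussian-moment sum --- concentrate simultaneously about their means at a rate exponential in $n$, so that the resulting probabilities genuinely deserve the name ``overwhelming''; and (iii) dispose of the degenerate cases, namely $\htheta_w$ at an endpoint of $[\beta_w,1]$ and the measure-zero event on which the $\ell_1$ minimizer fails to be unique. The complementary lower bound needed for part 2 --- that the random null-space cannot avoid $S_w$ once $w(S_w)$ exceeds $\sqrt m$ --- rests on the same family of concentration estimates rather than on any new geometric input, so that the two matching bounds together pin the threshold down exactly.
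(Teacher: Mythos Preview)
Your proposal is correct and follows essentially the same route as the paper (via the cited works \cite{StojnicCSetam09,StojnicUpper10}): null-space characterization, Gordon's escape-through-the-mesh for the success direction, a dual Gordon-type bound for the failure direction, and a Lagrange-dual evaluation of the Gaussian width of the obstruction set. The only cosmetic difference is that you parameterize the width computation through the soft-thresholding level $\lambda$ while the paper works with order statistics of $|\h|$ and the cutoff index $c_w=(1-\theta_w)n$; your identification $\lambda=\sqrt{2}\,\erfinv\!\big(\tfrac{1-\htheta_w}{1-\beta_w}\big)$ is exactly the dictionary between the two, and the stationarity condition and the resulting value of $w(S_w)^2/n$ match (\ref{eq:thmweaktheta}) and (\ref{eq:thmweakalpha}) on the nose.
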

\begin{proof}
The first part was established in \cite{StojnicCSetam09} and the second one was established in \cite{StojnicUpper10}. An alternative way of establishing the same set of results was also presented in \cite{StojnicEquiv10}.
\end{proof}

We below provide a more informal interpretation of what was established by the above theorem. Assume the setup of the above theorem. Let $\alpha_w$ and $\beta_w$ satisfy the following:

\noindent \underline{\underline{\textbf{Fundamental characterization of the $\ell_1$ performance:}}}

\begin{center}
\shadowbox{$
(1-\beta_w)\frac{\sqrt{\frac{2}{\pi}}e^{-(\erfinv(\frac{1-\alpha_w}{1-\beta_w}))^2}}{\alpha_w}-\sqrt{2}\erfinv (\frac{1-\alpha_w}{1-\beta_w})=0.
$}
-\vspace{-.5in}\begin{equation}
\label{eq:thmweaktheta2}
\end{equation}
\end{center}

Then:
\begin{enumerate}
\item If $\alpha>\alpha_w$ then with overwhelming probability the solution of (\ref{eq:l1}) is the $k$-sparse $\x$ from (\ref{eq:system}).
\item If $\alpha<\alpha_w$ then with overwhelming probability there will be a $k$-sparse $\x$ (from a set of $\x$'s with fixed locations and signs of nonzero components) that satisfies (\ref{eq:system}) and is \textbf{not} the solution of (\ref{eq:l1}).
    \end{enumerate}

The above theorem (as well as corresponding results obtained earlier in \cite{DonohoPol,DonohoUnsigned})) essentially settles typically behavior of $\ell_1$ optimization when used for solving (\ref{eq:system}) or (\ref{eq:l0}). In this paper we will look at a couple of upgrades of the standard $\ell_1$ optimization from (\ref{eq:l1}). We will provide a rigorous analytical confirmation that these upgrades indeed improve on the performance of $\ell_1$ when it comes to the values of the recoverable sparsity (i.e. $\beta_w$). However, such an improvement will come with a price to pay. Namely, the algorithmic upgrades that we will consider will assume a certain amount of ``feedback", or in other words a certain amount of pre-knowledge about the problem at hand. Consequently, there will be two natural takeaways: 1) when such a pre-knowledge is available the upgraded versions will be superior to the standard $\ell_1$ (a fact clearly expected) and 2) one then may be able to translate the original problem (\ref{eq:system}) to a somewhat different problem that accounts for ability of providing the needed ``feedback" (a fact probably expected but here precisely characterized as well).

We organize the rest of the paper in the following way. In Section
\ref{sec:knownsupp} we introduce the first of the two above mentioned upgraded versions of (\ref{eq:l1}) and provide its a performance analysis in a statistical context. In Section \ref{sec:hidknownsupp} we then present the second one together with its a performance analysis. Finally, in Sections \ref{sec:discussion} and \ref{sec:conc} we discuss obtained results and their potential value.

\section{Partially known support}
\label{sec:knownsupp}

In this section we will look at a slightly modified version of the problem from (\ref{eq:system}) (or (\ref{eq:l0})). We start by recalling that what makes the problem in (\ref{eq:l1}) hard is determining the location of nonzero components of $\x$ (from this point on, we will often refer to these locations as the support of vector $\x$ and occasionally may even denote it as $supp(\x)$). One then may wonder if there was a way to determine some of these locations would then be possible to recover a higher sparsity by using (\ref{eq:l1}) or its a slight modification. The analysis that we will present below will provide a positive answer to this question. Moreover, depending on how many of these locations are \emph{a priori} known one can actually precisely quantify what type of improvement over standard $\ell_1$ from (\ref{eq:l1}) can be expected. Before proceeding with the analysis we first introduce several mathematical terms that we will often use.

We start by introducing vectors with \emph{partially} known support, see Figure \ref{fig:model2} (more on this type of vectors as well as on their potential applications can be found in e.g. \cite{VasLu09}).
\begin{figure}[htb]
\centering
\centerline{\epsfig{figure=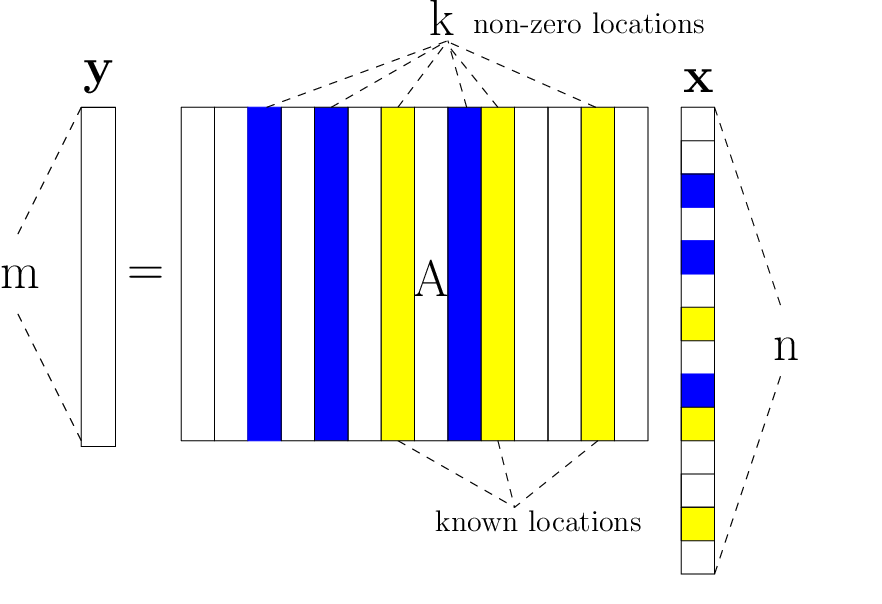,width=10.5cm,height=6cm}}
\caption{Model of a linear system; vector $\x$ is $k$-sparse; some of the non-zero locations are known}
\label{fig:model2}
\end{figure}
Under partially known support we assume that \emph{locations} of a fraction of the non-zero components of $\x$ are \emph{a priori} known and that that knowledge can possibly be exploited in the design of the recovery algorithms. To make everything precise, we will denote by $\Pi$ the set of the indexes of the known locations of the non-zero components of $\x$. We will further denote by  $\eta k$ (where $\eta$ is a constant independent of $n$ and $0\leq \eta\leq 1$) the cardinality of set $\Pi$. To recover $k$-sparse $\x$ with partially known support one can then design the algorithms that would attempt to exploit the available ``feedback", i.e. the available information about known locations of a fraction of nonzero components of $\x$. There are obviously many ways how one can attempt to do so . Here, we will focus on a particular modification of (\ref{eq:l1}) considered in e.g. \cite{VasLu09,SteChr09,StojnicICASSP10knownsupp}. Such a modification assumes the following slight adjustment to (\ref{eq:l1})
\begin{eqnarray}
\mbox{min} & & \sum_{i\notin\Pi} |\x_i|\nonumber \\
\mbox{subject to} & & A\x=\y. \label{eq:l1imp}
\end{eqnarray}
We will on occasion refer to the above adjustment of the $\ell_1$ as the \emph{partial} $\ell_1$. One then expects that the above algorithm will perform better than the standard $\ell_1$ from (\ref{eq:l1}). Below we will provide an analysis that will confirm such an expectation in a statistical context. Moreover, it will precisely quantify by how much the algorithm from (\ref{eq:l1imp}) improves on its a counterpart from (\ref{eq:l1}).

\subsection{Performance analysis of (\ref{eq:l1imp})}
\label{sec:analknownsupp}

In this section we will attempt to obtain the results qualitatively similar to those presented in Theorem \ref{thm:thmweakthr}. Of course, the results presented in Theorem \ref{thm:thmweakthr} are related to performance of (\ref{eq:l1}), whereas here we will try to create their an analogue that relates to (\ref{eq:l1imp}). As mentioned earlier, the results presented in Theorem \ref{thm:thmweakthr} were obtained in a series of work \cite{StojnicICASSP09,StojnicCSetam09,StojnicUpper10}. Below, we adapt some of these results so that they can handle the problems of interest here. In doing so, we will in this and all subsequent sections assume a substantial level of familiarity with many of the well-known results that relate to the performance characterization of (\ref{eq:l1}) (we will fairly often recall on many results/definitions that we established in \cite{StojnicICASSP09,StojnicCSetam09,StojnicUpper10,StojnicICASSP10knownsupp}).

Before proceeding further with a detail presentation we briefly sketch what specifically we will be interested in showing below. Namely, using the analysis of \cite{StojnicCSetam09,StojnicUpper10} mentioned earlier, for a specific group of randomly generated matrices $A$, one can determine values $\beta_w^{(p)}$ for the entire range of $\alpha$, i.e. for $0\leq \alpha\leq 1$, where $\beta_{w}^{(p)}$ is the
maximum allowable value of $\beta$ such that
(\ref{eq:l1imp}) finds the $k$-sparse solution of (\ref{eq:system}) with overwhelming
probability for \emph{any} $k$-sparse $\x$ with given fixed locations of non-zero components, a given fixed combination of its elements signs, and \emph{a priori} known location of a fraction $\eta$ of its non-zero components. (As discussed in \cite{StojnicICASSP10var,StojnicICASSP09,StojnicCSetam09,StojnicICASSP10knownsupp}, this value of $\beta_w^{(p)}$ is often referred to as the \emph{weak} threshold.) Clearly, $\beta_w^{(p)}$ will be a function of the fraction of known support $\eta$. Furthermore, (as expected) it will turn out that as $\eta$ increases the values of $\beta_w^{(p)}$ increase as well. This, in other words, means that a larger number of known non-zero locations implies a higher recoverable sparsity in (\ref{eq:l1imp}).

We are now ready to start the analysis. We begin by recalling on a theorem from \cite{StojnicICASSP10knownsupp} that provides a characterization as to when the solution of (\ref{eq:l1imp}) is $\tilde{\x}$, i.e. the $k$ sparse solution of (\ref{eq:system}) or (\ref{eq:l0}). Since the analysis will clearly be irrelevant with respect to what particular locations and what particular combination of signs of nonzero elements are chosen, we can for the simplicity of the exposition and without loss of generality assume that the components $\x_{1},\x_{2},\dots,\x_{n-k}$ of $\x$ are equal to zero and the components $\x_{n-k+1},\x_{n-k+2},\dots,\x_n$ of $\x$ are smaller than or equal to zero. Also, we will assume that it is \emph{a priori} known that $\x_{n-\eta k+1},\x_{n-\eta k+2},\dots,\x_n$ are among the $k$ non-zero components of $\x$ (one should note that while for our analysis it is assumed that $\x_{1},\x_{2},\dots,\x_{n-k}$ of $\x$ are equal to zero this fact is not known to the algorithm given in (\ref{eq:l1imp})). This essentially means that in(\ref{eq:l1imp}) one has $\Pi=\{n-\eta k+1,n-\eta k+2,\dots,n\}$.
Under these assumptions we have the following lemma (similar characterizations adopted in different contexts can be found in
\cite{DH01,XHapp,SPH,StojnicICASSP09,GN03}).
\begin{lemma}(Nonzero part of $\x$ has fixed signs, location; location of fraction $\eta$ of non-zero part is known)
Assume that an $m\times n$ measurement matrix $A$ is given. Let $\x$
be a $k$-sparse vector whose nonzero components are negative and let $\x_1=\x_2=\dots=\x_{n-k}=0$. Also, let it be known to the algorithm given in (\ref{eq:l1}) that $\x_{n-\eta k+1},\x_{n-\eta k+2},\dots,\x_n$ are among the $k$ non-zero components of $\x$, i.e. let $\Pi=\{n-\eta k+1,n-\eta k+2,\dots,n\}$, where $0\leq\eta\leq 1$.
Further, assume that $\y=A\x$ and that $\w$ is
an $n\times 1$ vector. If
\begin{equation}
(\forall \w\in \textbf{R}^n | A\w=0) \quad  \sum_{i=n-k+1}^{n-\eta k} \w_i<\sum_{i=1}^{n-k}|\w_{i}|,
\end{equation}
then the solutions of (\ref{eq:l1imp}) and (\ref{eq:l0}) (or (\ref{eq:system})) coincide. Moreover, if
\begin{equation}
(\exists \w\in \textbf{R}^n | A\w=0) \quad  \sum_{i=n-k+1}^{n-\eta k} \w_i\geq \sum_{i=1}^{n-k}|\w_{i}|,
\label{eq:thmeqgen}
\end{equation}
then there will be a $k$-sparse nonpositive $\x$ that satisfies (\ref{eq:system}) and is not the solution of (\ref{eq:l1imp}).
\label{lemma:lemmaknownsuppcond}
\end{lemma}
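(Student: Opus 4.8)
The plan is to treat Lemma~\ref{lemma:lemmaknownsuppcond} as a null‑space (KKT‑type) characterization and prove it by comparing the objective of (\ref{eq:l1imp}) at the planted vector $\x$ with its value at an arbitrary competing feasible point. Write $J(\z)=\sum_{i\notin\Pi}|\z_i|$ for the objective. With the normalization fixed in the text, $\Pi=\{n-\eta k+1,\dots,n\}$, so $J$ ignores the coordinates in $\Pi$ and ``sees'' only the zero block $Z=\{1,\dots,n-k\}$ (on which $\x$ vanishes) and the hidden‑support block $H=\{n-k+1,\dots,n-\eta k\}$ (on which $\x$ is strictly negative). Every feasible point of (\ref{eq:l1imp}) is $\x+\w$ with $A\w=0$, so the whole proof reduces to estimating $J(\x+\w)-J(\x)$.

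For the first (sufficiency) claim I would split
\[
J(\x+\w)-J(\x)=\sum_{i\in Z}|\w_i|+\sum_{i\in H}\bigl(|\x_i+\w_i|-|\x_i|\bigr),
\]
and then use, for $i\in H$, the trivial bound $|\x_i+\w_i|\ge-(\x_i+\w_i)=|\x_i|-\w_i$ (valid since $\x_i<0$), which gives $|\x_i+\w_i|-|\x_i|\ge-\w_i$ and hence
\[
J(\x+\w)-J(\x)\ \ge\ \sum_{i\in Z}|\w_i|-\sum_{i\in H}\w_i .
\]
By hypothesis the right‑hand side is strictly positive for every $\w\ne\0$ with $A\w=0$ — note that the stated hypothesis amounts exactly to this, since for $\w=\0$ it reads $0<0$, so it also forbids nonzero null‑space directions supported inside $\Pi$. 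Therefore $\x$ is the \emph{unique} minimizer of (\ref{eq:l1imp}), i.e. the solution of (\ref{eq:l1imp}) is the planted $k$‑sparse vector; under the paper's standing assumptions that vector is also the unique sparsest solution of $A\z=\y$, so the solutions of (\ref{eq:l1imp}) and (\ref{eq:l0})/(\ref{eq:system}) coincide.

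For the converse I would argue by explicit construction. Suppose some $\w\ne\0$ satisfies $A\w=0$ and $\sum_{i\in H}\w_i\ge\sum_{i\in Z}|\w_i|$. Build a $k$‑sparse nonpositive $\x$ supported on $\{n-k+1,\dots,n\}$ by setting $\x_i=0$ on $Z$, $\x_i=-|\w_i|-1$ on $H$ (so $\x_i<0$ and $\x_i+\w_i<0$), and $\x_i=-1$ on $\Pi$; put $\y=A\x$. Then $\x+\w$ is feasible, $\x+\w\ne\x$, and since $\x_i+\w_i<0$ on $H$,
\[
J(\x+\w)=\sum_{i\in Z}|\w_i|+\sum_{i\in H}(-\x_i-\w_i)=J(\x)+\Bigl(\sum_{i\in Z}|\w_i|-\sum_{i\in H}\w_i\Bigr)\ \le\ J(\x),
\]
the last step being exactly the hypothesis (strictly better when (\ref{eq:thmeqgen}) holds with strict inequality). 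Hence $\x$ is not the (unique) solution of (\ref{eq:l1imp}), which is the asserted claim.

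I do not expect a genuine obstacle here: the argument is bookkeeping around the absolute value, and the proof parallels the analogous characterization for plain $\ell_1$ in \cite{StojnicICASSP09,StojnicCSetam09} with $Z\cup H$ replacing the full complement of the support. The only points that need care are (i) observing that the known block $\Pi$ drops out of $J$ and therefore never constrains $\w$ — which is why only the $H$‑sum, not a sum over the whole support, appears in the condition; (ii) choosing $\x$ negative enough on $H$ in the converse so that the $H$‑absolute values open with the ``$-$'' sign, converting the hypothesis directly into $J(\x+\w)\le J(\x)$; and (iii) noticing that the ``for all (nonzero) $\w$'' \emph{strict} inequality is precisely what upgrades ``$\x$ is a minimizer'' to ``$\x$ is the \emph{unique} minimizer,'' ruling out tie‑producing directions living entirely on $\Pi$.
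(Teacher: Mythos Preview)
Your argument is correct and is precisely the standard null--space/descent--direction characterization that the paper invokes by citation: the paper's own proof consists only of the line ``Follows directly from the corresponding results in \cite{StojnicICASSP09,StojnicCSetam09,StojnicUpper10},'' and what you wrote is exactly how those results are obtained, specialized to the objective $J(\z)=\sum_{i\notin\Pi}|\z_i|$. Your handling of the two edge issues (interpreting the strict inequality as ranging over nonzero $\w$, and in the converse choosing $\x$ negative enough on $H$ so that $|\x_i+\w_i|=-\x_i-\w_i$) is the right way to make the statement precise.
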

\begin{proof}
Follows directly from the corresponding results in \cite{StojnicICASSP09,StojnicCSetam09,StojnicUpper10}.
\end{proof}
Having matrix $A$ such that
(\ref{eq:thmeqgen}) holds would be enough for solutions of (\ref{eq:l1}) and (\ref{eq:l0}) (or (\ref{eq:system})) to coincide. If one
assumes that $m$ and $k$ are proportional to $n$ (the case of our
interest in this paper) then the construction of the deterministic
matrices $A$ that would satisfy
(\ref{eq:thmeqgen}) is not an easy task (in fact, one may say that together with the ones that correspond to the standard $\ell_1$ it is one of the most fundamental open problems in the area of theoretical compressed sensing). However, turning to
random matrices significantly simplifies things. That is the route that will pursuit below. In fact to be a bit more specific, we will assume that the elements of matrix $A$ are i.i.d. standard normal random variables. All results that we will present below will hold for many other types of randomness (we will discuss this in more detail in Section \ref{sec:conc}). However, to make the presentation as smooth as possible we assume the standard Gaussian scenario.

We then follow the strategy of \cite{StojnicCSetam09}. To that end we will make use of the following theorem:
\begin{theorem}(\cite{Gordon88} Escape through a mesh)
\label{thm:Gordonmesh} Let $S$ be a subset of the unit Euclidean
sphere $S^{n-1}$ in $R^{n}$. Let $Y$ be a random
$(n-m)$-dimensional subspace of $R^{n}$, distributed uniformly in
the Grassmanian with respect to the Haar measure. Let
\begin{equation}
w(S)=E\sup_{\w\in S} (\h^T\w) \label{eq:widthdef}
\end{equation}
where $\h$ is a random column vector in $R^{n}$ with i.i.d. ${\cal
N}(0,1)$ components. Assume that
$w(S)<\left ( \sqrt{m}-\frac{1}{4\sqrt{m}}\right )$. Then
\begin{equation}
P(Y\cap S= \emptyset )>1-3.5e^{-\frac{\left (\sqrt{m}-\frac{1}{4\sqrt{m}}-w(S) \right ) ^2}{18}}.
\label{eq:thmesh}
\end{equation}
\end{theorem}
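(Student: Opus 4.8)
The plan is to realize the random $(n-m)$-dimensional subspace $Y$ explicitly as the kernel of a Gaussian matrix, reduce the escape event to a lower bound on a min-max of a Gaussian bilinear form, and then control that min-max by Gordon's Gaussian comparison inequality together with Gaussian concentration. First I would take $A$ to be an $m\times n$ matrix with i.i.d. ${\cal N}(0,1)$ entries; almost surely $\dim\ker A = n-m$, and by rotational invariance of the law of $A$ the kernel $\ker A$ has the same (Haar) distribution on the Grassmannian as $Y$, so it suffices to treat $Y=\ker A$. Replacing $S$ by its closure in $S^{n-1}$ changes neither $w(S)$ (the supremum of a continuous function agrees over a set and its closure) nor the event $\{Y\cap S=\emptyset\}$ in a way that weakens the conclusion, so I may assume $S$ compact. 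Then $Y\cap S=\emptyset$ precisely when $\Phi(A):=\min_{\w\in S}\|A\w\|_2>0$, and using $\|A\w\|_2=\max_{\u\in S^{m-1}}\u^TA\w$ this becomes the min-max $\Phi(A)=\min_{\w\in S}\max_{\u\in S^{m-1}}\u^TA\w$ of the centered Gaussian field $X_{\u,\w}:=\u^TA\w$.

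The second step is the comparison. I would compare $X_{\u,\w}$ with the decoupled Gaussian field $Z_{\u,\w}:=\g^T\u+\h^T\w$, where $\g\in R^m$ and $\h\in R^n$ are independent standard Gaussian vectors, via the increment form of Gordon's min-max comparison inequality. The hypotheses to verify are: for fixed $\w$ the two fields have identical increments in $\u$ (both equal $\|\u-\u'\|_2^2$); and for $\w\neq\w'$ one has $E(X_{\u,\w}-X_{\u',\w'})^2=2-2\langle\u,\u'\rangle\langle\w,\w'\rangle$ while $E(Z_{\u,\w}-Z_{\u',\w'})^2=4-2\langle\u,\u'\rangle-2\langle\w,\w'\rangle$, whose difference is $2(1-\langle\u,\u'\rangle)(1-\langle\w,\w'\rangle)\ge 0$. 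These are exactly what is needed to conclude $E\,\Phi(A)\ge E\,\min_{\w\in S}\max_{\u\in S^{m-1}}Z_{\u,\w}$. Since $\max_{\u\in S^{m-1}}(\g^T\u+\h^T\w)=\|\g\|_2+\h^T\w$, the right-hand side equals $E\|\g\|_2+E\min_{\w\in S}\h^T\w=E\|\g\|_2-w(S)$ (using $-\h\overset{d}{=}\h$), and the classical estimate $E\|\g\|_2=E\chi_m\ge\sqrt m-\tfrac{1}{4\sqrt m}$ gives $E\,\Phi(A)\ge\delta:=\sqrt m-\tfrac{1}{4\sqrt m}-w(S)$, a positive quantity under the hypothesis of the theorem.

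The last step is concentration. The map $A\mapsto\Phi(A)=\min_{\w\in S}\|A\w\|_2$ is $1$-Lipschitz in the Frobenius norm, since $|\,\|A\w\|_2-\|A'\w\|_2\,|\le\|(A-A')\w\|_2\le\|A-A'\|_F$ for every $\w\in S^{n-1}$; regarding $A$ as a standard Gaussian vector in $R^{mn}$, Gaussian concentration gives $P(\Phi(A)\le E\,\Phi(A)-t)\le e^{-t^2/2}$ for all $t>0$. Taking $t=\delta\le E\,\Phi(A)$ yields $P(Y\cap S\neq\emptyset)=P(\Phi(A)\le 0)\le e^{-\delta^2/2}$, hence $P(Y\cap S=\emptyset)\ge 1-e^{-\delta^2/2}$, which is stronger than the stated $1-3.5\,e^{-\delta^2/18}$. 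I expect the genuine obstacle to be the comparison step: on the sphere the diagonal variances of $X$ and $Z$ are unequal (they are $1$ and $2$), so one cannot use the variance-matched form of Gordon's theorem but must appeal to its Sudakov--Fernique-type version that relies only on the increment inequalities, and one must orient those inequalities correctly (and pass from finite nets to the compact index sets) so that the min-max comparison points in the direction that lower-bounds $\Phi$. The $\chi_m$-mean estimate and the concentration inequality are routine, and the particular constants $3.5$ and $18$ in the quoted statement are merely artifacts of Gordon's original, less streamlined tail bound.
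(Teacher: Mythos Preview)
The paper does not prove this theorem at all; it is quoted verbatim from Gordon's 1988 paper (the citation is in the theorem heading) and used as a black box in the subsequent analysis. There is therefore no ``paper's own proof'' to compare against.

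That said, your sketch is the standard modern proof of Gordon's escape theorem and is essentially correct. The reduction to $\Phi(A)=\min_{\w\in S}\max_{\u\in S^{m-1}}\u^TA\w$ via the Gaussian kernel is fine, your increment computations are right, and the orientation of the comparison is the correct one: with the Sudakov--Fernique form of Gordon's min--max inequality (equal increments within a fixed $\w$, larger increments across distinct $\w$'s for the decoupled process), one gets $E\,\Phi(A)\ge E\|\g\|_2-w(S)$. You are also right to flag that the diagonal variances differ ($1$ versus $2$), so the covariance-matched version of Gordon does not apply directly; either one invokes the increment-only version, or one adds an independent scalar Gaussian to $\u^TA\w$ to equalize the variances without changing increments or the expected min--max. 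The $1$-Lipschitz concentration step is clean and indeed gives the sharper tail $e^{-\delta^2/2}$; the constants $3.5$ and $18$ in the quoted statement come from Gordon's original argument, which bounded the tail probability of the comparison process directly rather than going through expectation plus concentration.

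One small caution: the inequality $E\chi_m\ge\sqrt m-\tfrac{1}{4\sqrt m}$ is tighter than the more commonly cited $E\chi_m\ge m/\sqrt{m+1}\approx\sqrt m-\tfrac{1}{2\sqrt m}$, so if you want that exact constant you should point to a sharp gamma-ratio bound (e.g.\ Wendel-type inequalities) rather than just call it ``classical.'' This is cosmetic, however, and does not affect the argument.
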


As mentioned above, to make use of Theorem \ref{thm:Gordonmesh} we follow the strategy presented in \cite{StojnicCSetam09}. We start by defining a set $\Swp$
\begin{equation}
\Swp=\{\w\in S^{n-1}| \quad \sum_{i=n-k+1}^{n-\eta k} \w_i<\sum_{i=1}^{n-k}|\w_{i}|\},\label{eq:defSwp}
\end{equation}
where $S^{n-1}$ is the unit sphere in $R^n$. The strategy of \cite{StojnicCSetam09} assumes roughly the following: if $w(\Swp)< \sqrt{m}-\frac{1}{4\sqrt{m}}$ is positive with overwhelming probability for certain combination of $k$, $m$, and $n$ then for $\alpha=\frac{m}{n}$ one has a lower bound $\beta_{w}^{(p)}=\frac{k}{n}$ on the true value of the \emph{weak} threshold with overwhelming probability (under overwhelming probability we of course assume a probability that is no more than a number exponentially decaying in $n$ away from $1$). More on the definition of the weak threshold the interested reader can find in e.g. \cite{StojnicCSetam09,StojnicICASSP09,DonohoPol}. The above basically means that if one can handle $w(\Swp)$ then, when $n$ is large one can, roughly speaking, use the condition $w(\Swp)< \sqrt{m}$ to obtain an attainable lower bound $\beta_{w}^{(p)}$ for any given $0<\alpha\leq 1$.

To that end we then look at
\begin{equation}
w(S_{w}^{(p)})=E\max_{\w\in S_{w}^{(p)}} (\h^T\w), \label{eq:widthdefswp}
\end{equation}
where we have replaced the $\sup$ from (\ref{eq:widthdef}) with a $\max$. Following further what was done in \cite{StojnicCSetam09,StojnicICASSP10knownsupp} one then can write
\begin{eqnarray}
w(\Swp) & = & E\max_{\w\in \Swp} (\h^T\w)=E\max_{\w\in \Swp} (\sum_{i=1}^{n-k} |\h_i\w_i|+\sum_{i=n-k+1}^{n}\h_i\w_i) \nonumber \\
& = & E\max_{\w\in \Swp} (\sum_{i=1}^{n-k} |\h_i||\w_i|+\sum_{i=n-k+1}^{n}\h_i\w_i).\label{eq:workww0}
\end{eqnarray}
Let $\h_{1:(n-k)}=(\h_1,\h_2,\dots,\h_{n-k})^T$. Further, let now
$|\h|_{(i)}^{(n-k)}$ be the $i$-th smallest magnitude of elements of $\h_{1:(n-k)}$. Set
$\hw=(|\h|_{(1)}^{(n-k)},|\h|_{(2)}^{(n-k)},\dots,|\h|_{(n-k)}^{(n-k)},\h_{n-k+1},\h_{n-k+2},\dots,\h_{n})^T$.\\
Then one can simplify (\ref{eq:workww0}) in the following way
\begin{eqnarray}
w(\Swp) = E\max_{\t\in R^n} & &  \hw^T\t \nonumber \\
\mbox{subject to} &  & \t_i\geq 0, 0\leq i\leq (n-k)\nonumber \\
& & \sum_{i=n-k+1}^{n-\eta k}\t_i\geq \sum_{i=1}^{n-k} \t_i \nonumber \\
& & \sum_{i=1}^n\t_i^2\leq 1.\label{eq:workww1}
\end{eqnarray}
Let $\z\in R^n$ be a column vector such that $\z_i=1,1\leq i\leq (n-k)$, $\z_i=-1,n-k+1\leq i\leq n-\eta k$, and $\z_i=0,n-\eta k+1\leq i\leq n$. Following step by step the derivation in \cite{StojnicCSetam09} one has, based on the Lagrange duality theory, that there is a $\cweak=(1-\theta_w)n\leq (n-k)$ such that
\begin{multline}
\lim_{n\rightarrow \infty} \frac{w(\Swp)}{\sqrt{n}}=\lim_{n\rightarrow \infty} \frac{E\max_{\w\in S_{w}^{(p)}} (\h^T\w)}{\sqrt{n}}\approxeq
\sqrt{\lim_{n\rightarrow \infty} \frac{E\sum_{i=\cweak+1}^n\hw_i^2}{n}-\frac{(\lim_{n\rightarrow \infty} \frac{E(\hw^T\z)-E\sum_{i=1}^{\cweak}\hw_i}{n})^2}{1-\lim_{n\rightarrow \infty}\frac{\cweak}{n}-\eta \lim_{n\rightarrow \infty}\frac{k}{n}}}.\\\label{eq:wwp}
\end{multline}
where $\hw_i$ is the $i$-th element of vector $\hw$. Moreover, \cite{StojnicCSetam09} also establishes the way to determine a critical $\cweak$. Roughly speaking it establishes the following identity
\begin{equation}
\frac{(\lim_{n\rightarrow \infty} \frac{E(\hw^T\z)-E\sum_{i=1}^{\cweak}\hw_i}{n})^2}{1-\lim_{n\rightarrow \infty}\frac{\cweak}{n}-\eta \lim_{n\rightarrow \infty}\frac{k}{n}}\approxeq \lim_{n\rightarrow \infty} \frac{E\hw_{\cweak}}{n}.\label{eq:condcwp}
\end{equation}
Using further the technique of \cite{StojnicCSetam09} one can actually explicitly characterize $w(\Swp)$ in (\ref{eq:wwp}) in the following way:
\begin{multline}
\hspace{-.5in}\left ( \lim_{n\rightarrow \infty} \frac{w(\Swp)}{\sqrt{n}} \right )^{2}=\left ( \lim_{n\rightarrow \infty} \frac{E\max_{\w\in S_{w}^{(p)}} (\h^T\w)}{\sqrt{n}} \right )^{2}\approxeq
 \frac{1-\beta_w^{(p)}}{\sqrt{2\pi}}\left (\sqrt{2\pi}+2\frac{\sqrt{2(\erfinv(\frac{1-\htheta_w}{1-\beta_w^{(p)}}))^2}}{e^{(\erfinv(\frac{1-\htheta_w}{1-\beta_w^{(p)}}))^2}}-\sqrt{2\pi}
\frac{1-\htheta_w}{1-\beta_w^{(p)}}\right )\\+\beta_w^{(p)}
-\frac{\left ((1-\beta_w^{(p)})\sqrt{\frac{2}{\pi}}e^{-(\erfinv(\frac{1-\hat{\theta}_w}{1-\beta_w^{(p)}}))^2}\right )^2}{\hat{\theta}_w-\eta\beta_w^{(p)}},\label{eq:wwp1}
\end{multline}
where $\htheta_w$ is the solution of
\begin{equation}
\frac{\left ((1-\beta_w^{(p)})\sqrt{\frac{2}{\pi}}e^{-(\erfinv(\frac{1-\theta_w}{1-\beta_w^{(p)}}))^2}\right )}{\theta_w-\eta\beta_w^{(p)}}\approxeq
\sqrt{2}\mbox{erfinv}\left (\frac{1-\theta_w}{1-\beta_w^{(p)}}\right ).\label{eq:condcwp1}
\end{equation}

We summarize the above results in the following theorem.

\begin{theorem}(Location of fraction $\eta$ of non-zero elements is known)
Let $A$ be an $m\times n$ measurement matrix in (\ref{eq:system})
with the null-space uniformly distributed in the Grassmanian. Let
the unknown $\x$ in (\ref{eq:system}) be $k$-sparse. Further, let the location and signs of nonzero elements of $\x$ be arbitrarily chosen but fixed. Assume that the location of $\eta k$ of non-zero elements is a priori known and let $\Pi$ be the set of those locations.
Let $k,m,n$ be large
and let $\alpha=\frac{m}{n}$, $\beta_w^{(p)}=\frac{k}{n}$, and $\eta$ be constants
independent of $m$, $n$, and $k$. Let $\erfinv$ be the inverse of the standard error function associated with zero-mean unit variance Gaussian random variable.  Further,
let $\htheta_w$, ($\beta_w^{(p)}\leq \htheta_w\leq 1$) be the solution of
\begin{equation}
(1-\beta_w^{(p)})\frac{\sqrt{\frac{2}{\pi}}e^{-(\erfinv(\frac{1-\theta_w}{1-\beta_w^{(p)}}))^2}}{\theta_w-\eta\beta_w^{(p)}}\approxeq\sqrt{2}\erfinv \frac{1-\theta_w}{1-\beta_w^{(p)}}).\label{eq:thmweaktheta}
\end{equation}

1) If $\alpha$ and $\beta_w^{(p)}$ further satisfy
\begin{equation}
\alpha> \frac{1-\beta_w^{(p)}}{\sqrt{2\pi}}\left (\sqrt{2\pi}+2\frac{\sqrt{2(\erfinv(\frac{1-\htheta_w}{1-\beta_w^{(p)}}))^2}}{e^{(\erfinv(\frac{1-\htheta_w}{1-\beta_w^{(p)}}))^2}}-\sqrt{2\pi}
\frac{1-\htheta_w}{1-\beta_w^{(p)}}\right )+\beta_w^{(p)}
-\frac{\left ((1-\beta_w^{(p)})\sqrt{\frac{2}{\pi}}e^{-(\erfinv(\frac{1-\hat{\theta}_w}{1-\beta_w^{(p)}}))^2}\right )^2}{\hat{\theta}_w-\eta\beta_w^{(p)}},\label{eq:thmweakalpha1}
\end{equation}
then the solution of (\ref{eq:l1imp}) and the solution of (\ref{eq:l0}) (or the $k$-sparse solution $\tilde{\x}$ of (\ref{eq:system})) coincide with overwhelming
probability.

2) If $\alpha$ and $\beta_w^{(p)}$ are such that
\begin{equation}
\alpha<\frac{1-\beta_w^{(p)}}{\sqrt{2\pi}}\left (\sqrt{2\pi}+2\frac{\sqrt{2(\erfinv(\frac{1-\htheta_w}{1-\beta_w^{(p)}}))^2}}{e^{(\erfinv(\frac{1-\htheta_w}{1-\beta_w^{(p)}}))^2}}-\sqrt{2\pi}
\frac{1-\htheta_w}{1-\beta_w^{(p)}}\right )+\beta_w^{(p)}
-\frac{\left ((1-\beta_w^{(p)})\sqrt{\frac{2}{\pi}}e^{-(\erfinv(\frac{1-\hat{\theta}_w}{1-\beta_w^{(p)}}))^2}\right )^2}{\hat{\theta}_w-\eta\beta_w^{(p)}},\label{eq:thmweakalpha2}
\end{equation}
then with overwhelming probability there will be a $k$-sparse $\x$ (from a set of $\x$'s with fixed locations and signs of nonzero components) that satisfies (\ref{eq:system}) and is \textbf{not} the solution of (\ref{eq:l1}).
\label{thm:thmweakthrknownsupp}
\end{theorem}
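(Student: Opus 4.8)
The plan is to follow the ``escape through a mesh'' route from \cite{StojnicCSetam09}, adapted to the partial $\ell_1$ set $\Swp$ defined in (\ref{eq:defSwp}), and to tie the two directions of the theorem together through the sign of $w(\Swp) - \sqrt{m}$. First I would invoke Lemma \ref{lemma:lemmaknownsuppcond}: the success of (\ref{eq:l1imp}) is equivalent to the statement that the null-space of $A$ misses the set of ``bad'' directions, i.e. the complement of $\Swp$ intersected with the sphere. Since the null-space of $A$ is a uniformly random $(n-m)$-dimensional subspace, Theorem \ref{thm:Gordonmesh} applies directly: if $w(\Swp) < \sqrt{m} - \tfrac{1}{4\sqrt{m}}$, then with probability at least $1 - 3.5e^{-(\sqrt{m}-1/(4\sqrt{m})-w(\Swp))^2/18}$ the null-space avoids the bad set, hence (\ref{eq:l1imp}) recovers $\tilde{\x}$. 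So the whole of part (1) reduces to showing that the right-hand side of (\ref{eq:thmweakalpha1}) is precisely $\big(\lim_n w(\Swp)/\sqrt{n}\big)^2/n$-scaled, i.e. that $\alpha$ exceeding that expression forces $w(\Swp) < \sqrt{m}$ for large $n$; and then the exponent in Gordon's bound is bounded below by a positive multiple of $n$, giving the overwhelming probability claim.

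The core of the argument is therefore the computation of $w(\Swp)$. Here I would reproduce the chain (\ref{eq:workww0})--(\ref{eq:workww1}): the maximization over $\w\in\Swp$ can be rewritten by noting that at the optimum each coordinate $\w_i$ in the ``free-sign'' block aligns with $\h_i$, so $\h^T\w$ becomes $\sum_{i\le n-k}|\h_i||\w_i| + \sum_{i>n-k}\h_i\w_i$; after sorting the first $n-k$ magnitudes of $\h$ into the vector $\hw$, the problem is the linear program (\ref{eq:workww1}) with the single coupling constraint $\sum_{i=n-k+1}^{n-\eta k}\t_i \ge \sum_{i=1}^{n-k}\t_i$ and the unit-ball constraint. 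Invoking Lagrange duality exactly as in \cite{StojnicCSetam09}, the optimal primal solution zeroes out the $\cweak = (1-\theta_w)n$ smallest magnitudes, yielding the closed form (\ref{eq:wwp}), with the threshold index $\cweak$ determined by the stationarity relation (\ref{eq:condcwp}). The only genuinely new feature relative to the plain-$\ell_1$ analysis is the ``$-\eta\beta_w^{(p)}$'' correction in the denominators of (\ref{eq:wwp1}) and (\ref{eq:condcwp1}): it enters because the $\eta k$ a priori known coordinates are removed from both the coupling constraint's left sum and the effective dimension counted against the unit-ball constraint. I would then replace the empirical quantities $E\sum \hw_i^2/n$, $E(\hw^T\z)/n$, $E\sum_{i\le\cweak}\hw_i/n$, $E\hw_{\cweak}/n$ by their large-$n$ Gaussian limits expressed via $\erfinv$ (standard concentration of order statistics of i.i.d.\ $|\mathcal N(0,1)|$ variables about their quantile functions), arriving at (\ref{eq:wwp1})--(\ref{eq:condcwp1}) and hence at the RHS of (\ref{eq:thmweakalpha1}).

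For part (2) I would use the second clause of Lemma \ref{lemma:lemmaknownsuppcond}: if the null-space of $A$ does intersect the bad set, then some $k$-sparse nonpositive $\x$ fails (\ref{eq:l1imp}). The matching lower bound comes from the fact — established in \cite{StojnicUpper10} and reusable verbatim here — that Gordon's comparison is tight in the relevant regime: when $w(\Swp)/\sqrt n$ strictly exceeds $\sqrt\alpha$, a reverse argument (either the lower-bound half of the escape-through-a-mesh phenomenon, or the direct second-moment / Gaussian-process lower bound of \cite{StojnicUpper10}) shows the intersection is nonempty with overwhelming probability. Since the RHS of (\ref{eq:thmweakalpha2}) is exactly the same closed form $\big(\lim_n w(\Swp)/\sqrt n\big)^2$, the condition $\alpha$ below it is equivalent to $w(\Swp) > \sqrt{m}$ asymptotically, and the failure claim follows. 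I would then note that $\htheta_w$ solving (\ref{eq:thmweaktheta}) is the common definition of the threshold index in both parts, and that the two inequalities (\ref{eq:thmweakalpha1}) and (\ref{eq:thmweakalpha2}) partition the $(\alpha,\beta_w^{(p)},\eta)$ space, so together they pin down $\beta_w^{(p)}$ as a function of $\alpha$ and $\eta$.

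The main obstacle I anticipate is not any single step but the careful bookkeeping in the duality computation of $w(\Swp)$: one must verify that the extra constraint $\sum_{i=n-k+1}^{n-\eta k}\t_i \ge \sum_{i=1}^{n-k}\t_i$ (rather than the plain-$\ell_1$ constraint without the $\eta k$ truncation) is the only change that propagates, that the complementary-slackness structure still forces exactly $\cweak$ coordinates to vanish, and that the known-support block of size $\eta k$ contributes neither to the objective's ``balancing'' term nor to the effective quadratic budget — which is what produces the $\htheta_w - \eta\beta_w^{(p)}$ denominator rather than $\htheta_w - \beta_w^{(p)}$ or $\htheta_w$. A secondary technical point is making the ``$\approxeq$'' rigorous: showing the $n\to\infty$ limits of the order-statistic sums concentrate, which is routine but must be stated, and confirming that the positive gap $\sqrt m - w(\Swp) \sim c\sqrt n$ makes Gordon's exponential bound genuinely overwhelming. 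Neither is deep, but both are where an error would most plausibly hide.
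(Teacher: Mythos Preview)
Your proposal is correct and follows essentially the same route as the paper: invoke Lemma~\ref{lemma:lemmaknownsuppcond} to reduce to a null-space condition, apply Gordon's escape-through-a-mesh Theorem~\ref{thm:Gordonmesh}, compute $w(\Swp)$ via the reformulation (\ref{eq:workww0})--(\ref{eq:workww1}) and the Lagrange-duality argument of \cite{StojnicCSetam09} to obtain (\ref{eq:wwp1})--(\ref{eq:condcwp1}), and then use the matching lower bound from \cite{StojnicUpper10} together with the ``moreover'' clause of Lemma~\ref{lemma:lemmaknownsuppcond} for part~(2). Your identification of the sole structural change---the $\eta k$ known-support coordinates dropping out of the coupling constraint and the $\z$-vector, yielding the $\htheta_w - \eta\beta_w^{(p)}$ denominator in place of $\htheta_w$---is exactly the point the paper makes, and the technical caveats you flag (bookkeeping in the dual, concentration of order statistics) are the right ones.
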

\begin{proof}
The first part follows from the above discussion as well as from the considerations presented in \cite{StojnicICASSP10knownsupp} and the analysis presented in \cite{StojnicCSetam09}. The moreover part follows by a combination of the moreover part of Lemma \ref{lemma:lemmaknownsuppcond} and the considerations presented in \cite{StojnicUpper10,StojnicGorEx10}.
\end{proof}

\noindent \textbf{Remark:} To make writing easier in the previous theorem we removed all $\epsilon$'s used in Theorem \ref{thm:thmweakthr}.

In a more informal language one has the following interpretation of the above theorem. Assume the setup of the above theorem. Let $\alpha_w^{(p)}$ and $\beta_w^{(p)}$ satisfy the following:

\noindent \underline{\underline{\textbf{Fundamental characterization of the \emph{partial} $\ell_1$ performance:}}}

\begin{center}
\shadowbox{$
(1-\beta_w^{(p)})\frac{\sqrt{\frac{2}{\pi}}e^{-(\erfinv(\frac{1-\alpha_w^{(p)}}{1-\beta_w^{(p)}}))^2}}{\alpha_w^{(p)}-\eta\beta_w^{(p)}}-\sqrt{2}\erfinv (\frac{1-\alpha_w^{(p)}}{1-\beta_w^{(p)}})=0.
$}
-\vspace{-.5in}\begin{equation}
\label{eq:thmweakinfknownsupp}
\end{equation}
\end{center}

Then:
\begin{enumerate}
\item If $\alpha>\alpha_w^{(p)}$ then with overwhelming probability the solution of (\ref{eq:l1}) is the $k$-sparse $\x$ from (\ref{eq:system}).
\item If $\alpha<\alpha_w^{(p)}$ then with overwhelming probability there will be a $k$-sparse $\x$ (from a set of $\x$'s with fixed locations and signs of nonzero components) that satisfies (\ref{eq:system}) and is \textbf{not} the solution of (\ref{eq:l1}).
    \end{enumerate}

The above theorem essentially settles typical behavior of the \emph{partial} $\ell_1$ optimization from (\ref{eq:l1}) when used for solving (\ref{eq:system}) or (\ref{eq:l0}) assuming that a fraction $\eta$ of nonzero locations of $\x$ is a priori known.

The results for the weak threshold obtained from the above theorem
are presented in Figure \ref{fig:weakknownsupp}. Case $\eta=0$ corresponds to the standard compressed sensing setup where no information about the location of the non-zero components of $\x$ is a priori available. The threshold values obtained in that case correspond to the ones computed in \cite{StojnicICASSP10var,StojnicCSetam09,StojnicUpper10,StojnicEquiv10} (and presented in Theorem \ref{thm:thmweakthr}) and of course to those computed in \cite{DonohoPol}. As $\eta$ increases more knowledge about $\x$ is available and one expects that the threshold values of the recoverable sparsity should be higher. As results presented in Figure \ref{fig:weakknownsupp} indicate, the values of the threshold recoverable by the modified partial $\ell_1$ optimization from (\ref{eq:l1imp}) are indeed higher as $\eta$ increases. Also, on the right side of Figure \ref{fig:weakknownsupp} we show experimental results that we discuss below.
 \begin{figure}[htb]
\begin{minipage}[b]{0.5\linewidth}
\centering
\centerline{\epsfig{figure=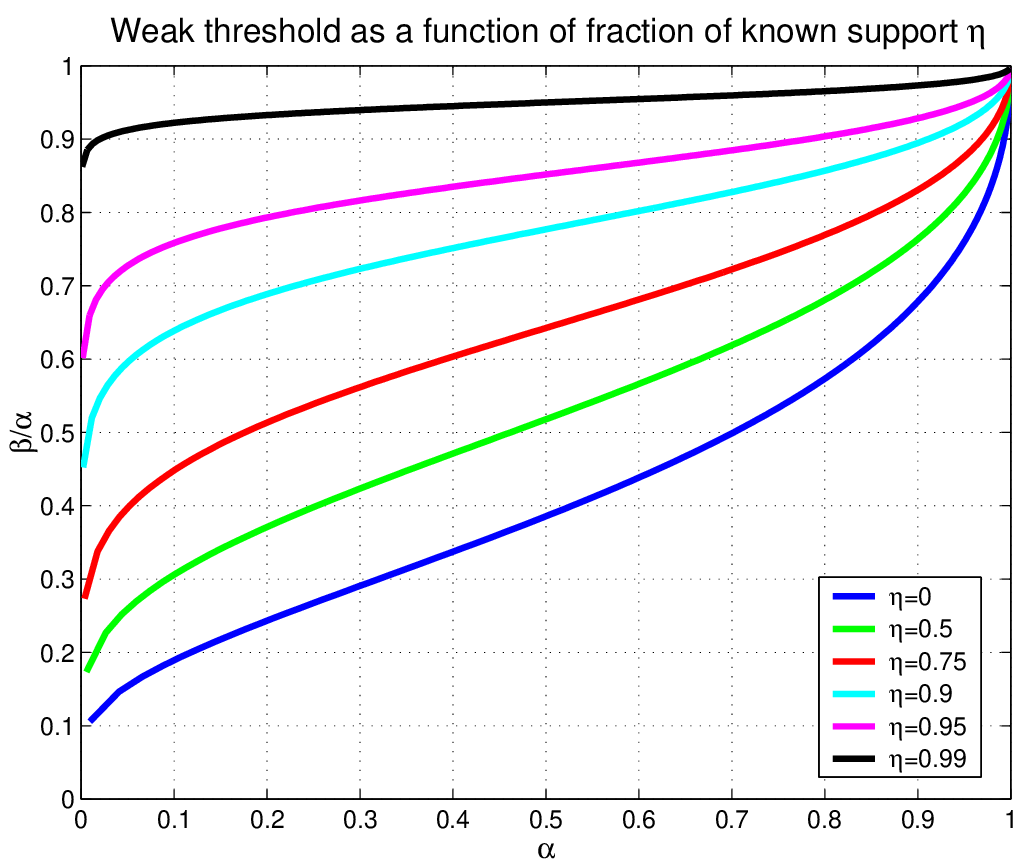,width=8cm,height=6cm}}
\end{minipage}
\begin{minipage}[b]{0.5\linewidth}
\centering
\centerline{\epsfig{figure=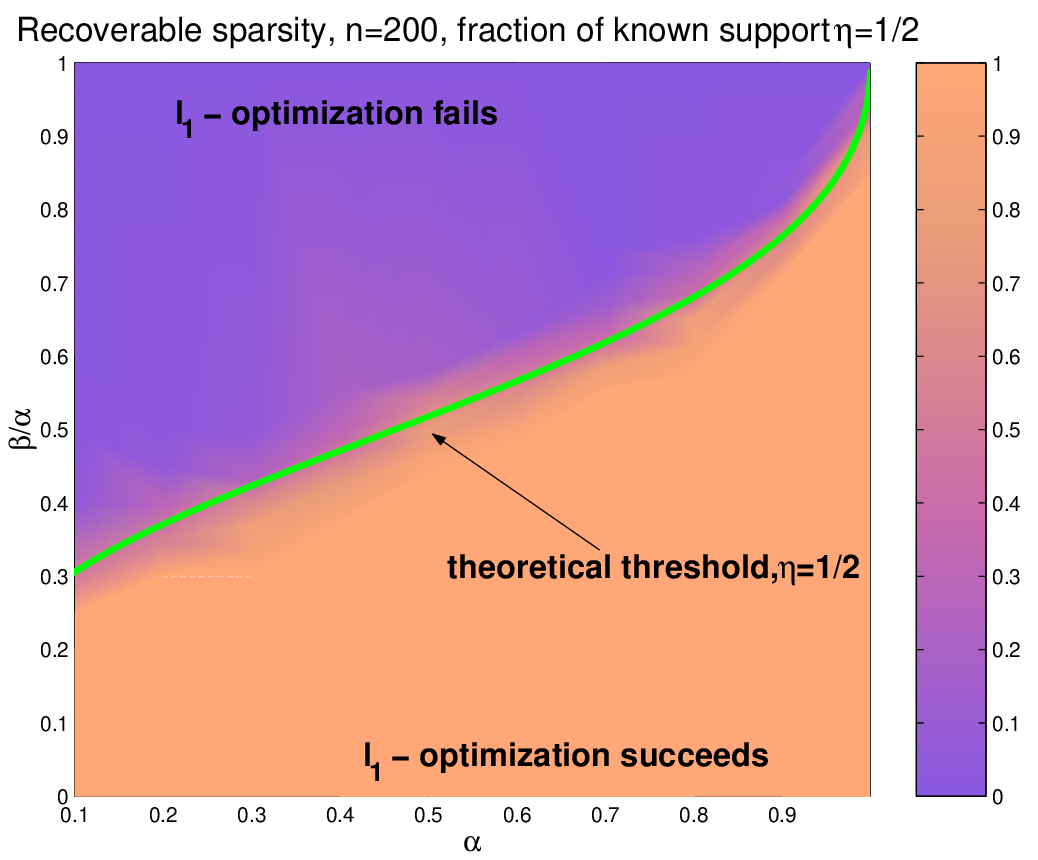,width=8cm,height=6cm}}
\end{minipage}
\caption{Left: Theoretical \emph{weak} threshold as a function of fraction of known support; Right: Experimentally recoverable sparsity; fraction of known support $\eta=\frac{1}{2}$}
\label{fig:weakknownsupp}
\end{figure}

\subsection{Numerical experiments}
\label{sec:simulp}

In this section we briefly discuss the results that we obtained from numerical experiments. In all our numerical experiments we fixed $n=200$ and $\eta=0.5$. We then generated matrices $A$ of size $m\times n$ with $m=(10,20,30,\dots,90,99)$. The components of the measurement matrices $A$ were generated as i.i.d. zero-mean unit variance Gaussian random variables.
For each $m$ we generated $k$-sparse signals $\x$ for several different values of $k$ from the transition zone (the locations of non-zero elements of $\x$ were chosen randomly and half or $\eta$ of them that correspond to the known part, i.e. to $\Pi$ in (\ref{eq:l1imp}), were chosen randomly as well). For each combination $(k,m)$ we generated $100$ different problem instances and recorded the number of times the partial $\ell_1$-optimization algorithm from (\ref{eq:l1imp}) failed to recover the correct $k$-sparse $\x$.
The obtained data are then interpolated and graphically presented on the right hand side of Figure \ref{fig:weakknownsupp}. The color of any point shows the probability of having partial $\ell_1$-optimization from (\ref{eq:l1imp}) succeed for a combination $(\alpha,\beta)$ that corresponds to that point. The colors are mapped to probabilities according to the scale on the right hand side of the figure.
The simulated results can naturally be compared to the theoretical prediction from Theorem \ref{thm:thmweakthrknownsupp}. Hence, we also show on the right hand side the theoretical value for the threshold calculated according to Theorem \ref{thm:thmweakthrknownsupp} (and obviously shown on the left hand side of the figure as well). We observe that the simulation results are in a good agreement with the theoretical calculation.

\section{\emph{Hidden} partially known support}
\label{sec:hidknownsupp}

In this section we will look at another slightly modified version of the problem from (\ref{eq:system}) (or (\ref{eq:l0})). As mentioned in the previous section, what makes the problem in (\ref{eq:l1}) hard is determining the location of nonzero components of $\x$. In the previous section we then looked at a bit relaxed scenario which in nutshell assumes the following: if there is a way to determine some of unknown locations then one should be able to recover a higher sparsity by using (\ref{eq:l1}) or its a slight modification. The analysis presented in the previous section then confirmed that if one uses for example (\ref{eq:l1imp}) instead of (\ref{eq:l1}) a higher sparsity is indeed recoverable. Moreover, depending on how many of these locations are \emph{a priori} known the analysis of the previous section precisely quantifies what type of improvement over standard $\ell_1$ from (\ref{eq:l1}), (\ref{eq:l1imp}) is expected to achieve.

Such a collection of results is then encouraging from the following point of view. Namely, if one can design an algorithm that provably recovers only a fraction of $supp(\x)$ then one can also guarantee an improvement over the standard $\ell_1$ from (\ref{eq:l1}). This in turn effectively translates the original sparse recovery problem from (\ref{eq:l0}) to its a possibly simpler version that only asks for a partial recovery. While such an understanding is conceptually correct, it contains a tiny problem. One has to be careful that for (\ref{eq:l1imp}) to be as successful as the results of the previous section predict, one should provide a set $\Pi$ that is known to contain only a subset of $supp(\x)$ (and basically nothing more than that). While designing algorithms that can provide a subset of $supp(\x)$ is not that hard (essentially any iterative upgrade of the standard $\ell_1$ from (\ref{eq:l1}) works in that way), it is substantially harder to insure that at the same time they do not provide anything else. In other words, it is more natural to expect that one can design algorithms that can provide a set of locations as an estimate of $supp(\x)$ such that it indeed does contain a fraction of elements in $supp(\x)$ but at the same time it also contains elements that are not in $supp(\x)$. This essentially means that typically all these iterative (or even not necessarily iterative) algorithms return a fraction of support of $\x$ hidden within a larger set of locations. For example, an algorithm can return a set of $k$ locations $\kappa$ that is an estimate for the support
of $\x$. Even when $\kappa$ does not match exactly $supp(\x)$ it may (and for almost any algorithm it will) still
contain some of the elements of $supp(\x)$. The difficulty is that one (differently from the previous section)
now does not know which of the locations are part of the support and which are not. If one knew which ones
are then obviously (\ref{eq:l1imp}) could be used for the recovery. However since this is not known one can not use (\ref{eq:l1imp})
directly. Perhaps surprisingly one can still benefit from having some of the support elements embedded in
the estimate $\kappa$. We in this section provide a precise characterization of such a benefit. However, before proceeding with the presentation we will first introduce a few mathematical definitions that we will need below.

We first introduce concept of vectors with \emph{hidden} partially known support (see Figure \ref{fig:model2hidden}). As usual let $\x$ be a $k$-sparse $n$-dimensional vector. Let $\kappa\subset \{1,2,\dots,n\}$ and let the cardinality of $\kappa$ be $k$ (we will for the simplicity choose $k$; however our results easily extend to any other value). Let $\Pi$ be the intersection of the set of nonzero locations of $\x$ ($supp(\x)$) and $\kappa$. As in the previous section, $\Pi$ is the set that is known to contain locations of some of the nonzero elements of $\x$. Differently though from what was the case in the previous section, $\Pi$ is not known now. What is known is $\kappa$ and the fact that $\Pi\in\kappa$. To make everything even more precise we will say that, as in the previous section, the cardinality of $\Pi$ is $\eta k$ (where $\eta$ is again a constant independent of $n$ and $0\leq \eta\leq 1$) and that $\x$ is a vector with \emph{hidden} partially known support. Moreover we will call $\kappa$ the estimate of $\x$'s support ($supp(\x)$).
\begin{figure}[htb]
\centering
\centerline{\epsfig{figure=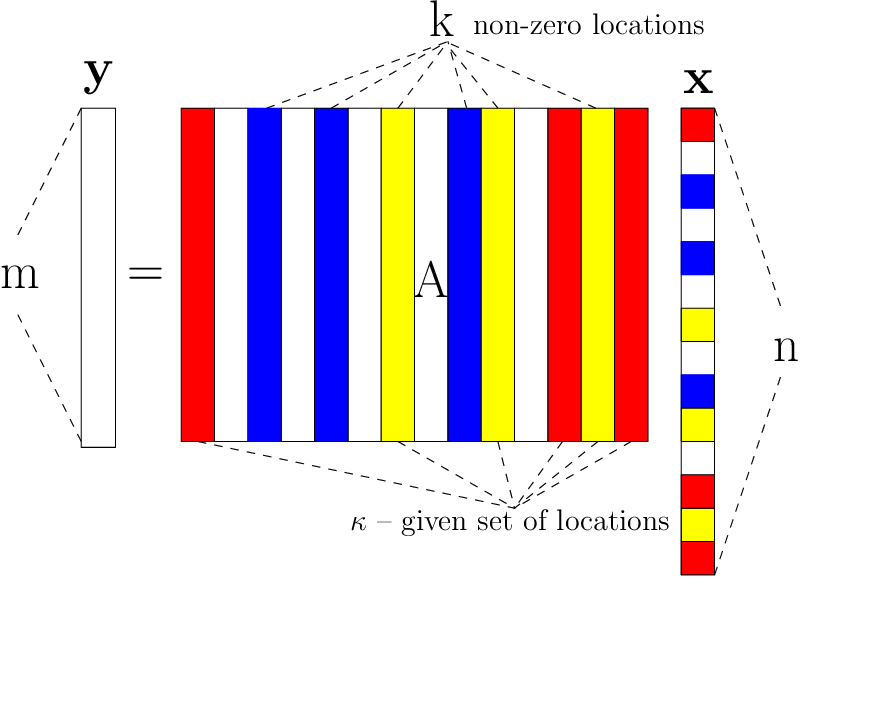,width=10.5cm,height=7cm}}
\vspace{-.4in}
\caption{Model of a linear system; vector $\x$ is $k$-sparse; some of the non-zero locations are in $\kappa$}
\label{fig:model2hidden}
\end{figure}
To recover $k$-sparse $\x$ with \emph{hidden} partially known support one can then design algorithms that would attempt to exploit the available ``feedback", i.e. the available information about known hidden locations of a fraction of nonzero components of $\x$. There are obviously many ways how one can attempt to do so. Here, we will focus on a particular modification of (\ref{eq:l1imp}) considered in the previous section (and ultimately in e.g. \cite{VasLu09,SteChr09,StojnicICASSP10knownsupp}). Such a modification assumes the following slight adjustment to (\ref{eq:l1imp})
\begin{eqnarray}
\mbox{min} & & \sum_{i\notin\kappa} |\x_i|\nonumber \\
\mbox{subject to} & & A\x=\y. \label{eq:l1imphidden}
\end{eqnarray}
We will on occasion refer to the above adjustment of the partial $\ell_1$ from (\ref{eq:l1imp}) as the \emph{hidden} partial $\ell_1$. One then expects that the above algorithm will perform better than the standard $\ell_1$ from (\ref{eq:l1}). Below we will provide an analysis that will confirm such an expectation in a statistical context. Moreover, it will precisely quantify by how much the algorithm from (\ref{eq:l1imphidden}) improves on its a counterpart from (\ref{eq:l1}).

\subsection{Performance analysis of (\ref{eq:l1imphidden})}
\label{sec:analknownsupphidden}

In this section we will attempt to obtain the results qualitatively similar to those presented in Theorems \ref{thm:thmweakthr} and \ref{thm:thmweakthrknownsupp}. Of course, the results presented in Theorems \ref{thm:thmweakthr} and \ref{thm:thmweakthrknownsupp} are related to performances of (\ref{eq:l1}) and (\ref{eq:l1imp}) respectively, whereas here we will try to create their an analogue that relates to (\ref{eq:l1imphidden}). As mentioned earlier, the results presented in Theorem \ref{thm:thmweakthr} were obtained in a series of work \cite{StojnicICASSP09,StojnicCSetam09,StojnicUpper10}. Below, we adapt some of these results as well as some of the results used in obtaining Theorem \ref{thm:thmweakthrknownsupp} so that they can handle the problems of interest here. In doing so, we will, as in the previous section, assume a substantial level of familiarity with many of the well-known results that relate to the performance characterization of (\ref{eq:l1}) (we will again fairly often recall on many results/definitions that we established in \cite{StojnicICASSP09,StojnicCSetam09,StojnicUpper10,StojnicICASSP10knownsupp}).

Before proceeding further with a detailed presentation we briefly recall on what specifically we will be interested in showing below. Namely, using the analysis of \cite{StojnicCSetam09,StojnicUpper10} mentioned earlier as well as what we presented in the previous section, for a specific group of randomly generated matrices $A$, one can determine values $\beta_w^{(h)}$ for the entire range of $\alpha$, i.e. for $0\leq \alpha\leq 1$, where $\beta_{w}^{(h)}$ is the
maximum allowable value of $\beta$ such that
(\ref{eq:l1imphidden}) finds the $k$-sparse solution of (\ref{eq:system}) with overwhelming
probability for \emph{any} $k$-sparse $\x$ with given fixed locations of non-zero components, a given fixed combination of its elements signs, and \emph{a priori} known to have a fraction $\eta$ of its non-zero components contained in a set of cardinality $k$, $\kappa$. As discussed in the previous section (and earlier of course in \cite{StojnicICASSP10var,StojnicICASSP09,StojnicCSetam09,StojnicICASSP10knownsupp}), this value of $\beta_w^{(h)}$ is often referred to as the \emph{weak} threshold. Clearly, $\beta_w^{(h)}$ will be a function of fraction $\eta$. Furthermore, (as expected) it will turn out that as $\eta$ increases the values of $\beta_w^{(h)}$ increase as well. This, in other words, means that a larger number of known (but hidden) non-zero locations implies a higher recoverable sparsity in (\ref{eq:l1imphidden}).

We are now ready to start the analysis. We again begin by establishing a lemma similar to Lemma \ref{lemma:lemmaknownsuppcond} (and of course to a corresponding theorem from \cite{StojnicICASSP10knownsupp}) that provides a characterization as to when the solution of (\ref{eq:l1imphidden}) is $\tilde{\x}$, i.e. the $k$ sparse solution of (\ref{eq:system}) or (\ref{eq:l0}). Since the analysis will again clearly be irrelevant with respect to what particular locations and what particular combination of signs of nonzero elements are chosen, we can for the simplicity of the exposition and without loss of generality again assume that the components $\x_{1},\x_{2},\dots,\x_{n-k}$ of $\x$ are equal to zero and the components $\x_{n-k+1},\x_{n-k+2},\dots,\x_n$ of $\x$ are smaller than or equal to zero. Also, we will assume that it is \emph{a priori} known that $\x_{n-\eta k+1},\x_{n-\eta k+2},\dots,\x_n$ are among the $k$ non-zero components of $\x$ (one should note that while for our analysis it is assumed that $\x_{1},\x_{2},\dots,\x_{n-k}$ of $\x$ are equal to zero this fact is not known to the algorithm given in (\ref{eq:l1imphidden})). This essentially means that one has $\Pi=\{n-\eta k+1,n-\eta k+2,\dots,n\}$. Moreover, without a loss of generality we will assume that $\kappa$ in (\ref{eq:l1imphidden}) is $\kappa=\{n-k-(1-\eta)k+1,n-k-(1-\eta)k+2,\dots,n-k,\Pi\}$ or in other words
$\kappa=\{n-k-(1-\eta)k+1,n-k-(1-\eta)k+2,\dots,n-k,n-\eta k+1,n-\eta k+2,\dots,n\}$.

Under these assumptions we have the following counterpart to Lemma \ref{lemma:lemmaknownsuppcond} (similar characterizations adopted in different contexts can be found in
\cite{DH01,XHapp,SPH,StojnicICASSP09,GN03}).
\begin{lemma}(\cite{StojnicICASSP10knownsupp} Nonzero part of $\x$ has fixed signs, location; location of fraction $\eta$ of non-zero part is known to be hidden in $\kappa$)
Assume that an $m\times n$ measurement matrix $A$ is given. Let $\x$
be a $k$-sparse vector whose nonzero components are negative and let $\x_1=\x_2=\dots=\x_{n-k}=0$. Also, let it be known to the algorithm given in (\ref{eq:l1}) that $\x_{n-\eta k+1},\x_{n-\eta k+2},\dots,\x_n$ are among the $k$ non-zero components of $\x$, i.e. let $\kappa=\{n-k-(1-\eta)k+1,n-k-(1-\eta)k+2,\dots,n-k,n-\eta k+1,n-\eta k+2,\dots,n\}$, where $0\leq\eta\leq 1$.
Further, assume that $\y=A\x$ and that $\w$ is
an $n\times 1$ vector. If
\begin{equation}
(\forall \w\in \textbf{R}^n | A\w=0) \quad  \sum_{i=n-k+1}^{n-\eta k} \w_i<\sum_{i=1}^{n-k-(1-\eta)k}|\w_{i}|,
\end{equation}
then the solutions of (\ref{eq:l1imphidden}) and (\ref{eq:l0}) (or (\ref{eq:system})) coincide. Moreover, if
\begin{equation}
(\exists \w\in \textbf{R}^n | A\w=0) \quad  \sum_{i=n-k+1}^{n-\eta k} \w_i\geq \sum_{i=1}^{n-k-(1-\eta)k}|\w_{i}|,
\label{eq:thmeqgenhidden}
\end{equation}
then there will be a $k$-sparse nonpositive $\x$ that satisfies (\ref{eq:system}) and is not the solution of (\ref{eq:l1imphidden}).
\label{lemma:lemmaknownsuppcondhidden}
\end{lemma}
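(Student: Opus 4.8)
The plan is to derive Lemma~\ref{lemma:lemmaknownsuppcondhidden} as an optimality/non-optimality characterization for the linear program (\ref{eq:l1imphidden}), in exactly the same spirit as Lemma~\ref{lemma:lemmaknownsuppcond} but with the objective $\sum_{i\notin\kappa}|\x_i|$ in place of $\sum_{i\notin\Pi}|\x_i|$. Recall the standing normalization: $\x_1=\dots=\x_{n-k}=0$, the nonzero entries $\x_{n-k+1},\dots,\x_n$ are $\le 0$, and $\kappa=\{n-k-(1-\eta)k+1,\dots,n-k\}\cup\{n-\eta k+1,\dots,n\}$, so that the indices split into three blocks: $Z_1=\{1,\dots,n-k-(1-\eta)k\}$ (zero entries, not in $\kappa$, hence penalized), $Z_2=\{n-k-(1-\eta)k+1,\dots,n-k\}$ (zero entries, in $\kappa$, hence unpenalized), and the nonzero block which further splits into $\{n-k+1,\dots,n-\eta k\}$ (nonzero, not in $\kappa$, penalized) and $\Pi=\{n-\eta k+1,\dots,n\}$ (nonzero, in $\kappa$, unpenalized).

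First I would write any feasible competitor to $\x$ as $\x+\w$ with $\w\in\textbf{R}^n$ satisfying $A\w=0$ (since both $\x$ and $\x+\w$ must satisfy $A\cdot=\y$), and compare objective values. The objective at $\x+\w$ is $\sum_{i\in Z_1}|\w_i|+\sum_{i=n-k+1}^{n-\eta k}|\x_i+\w_i|$, and at $\x$ it is $\sum_{i=n-k+1}^{n-\eta k}|\x_i|$ (the $Z_2$ and $\Pi$ contributions are zero because those coordinates are not penalized). Using $\x_i\le 0$ on the penalized nonzero block, the standard bound $|\x_i+\w_i|\ge |\x_i|-(-\w_i)\cdot\mathrm{sign}$ — more precisely $|\x_i+\w_i|\ge -(\x_i+\w_i)=|\x_i|-\w_i$ when $\x_i+\w_i\le 0$, which is the worst-relevant case — gives that $\x$ is strictly optimal whenever $\sum_{i\in Z_1}|\w_i| > \sum_{i=n-k+1}^{n-\eta k}\w_i$ for every nonzero $\w$ in the null-space; this is exactly the first displayed condition of the lemma. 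The ``moreover'' direction is the contrapositive-type construction: if there exists a null-space vector $\w$ with $\sum_{i=n-k+1}^{n-\eta k}\w_i\ge\sum_{i\in Z_1}|\w_i|$, then one builds an explicit $k$-sparse nonpositive $\x$ (by taking $\x$ supported appropriately and scaling $\w$) for which $\x+\w$ has objective no larger than $\x$, so $\x$ is not the unique minimizer of (\ref{eq:l1imphidden}); this is the construction already carried out in the references \cite{StojnicICASSP09,StojnicCSetam09,StojnicUpper10} and \cite{StojnicICASSP10knownsupp}, and I would invoke it rather than redo it.

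The key observation making this almost mechanical is that passing from Lemma~\ref{lemma:lemmaknownsuppcond} to Lemma~\ref{lemma:lemmaknownsuppcondhidden} changes only one thing: in the non-hidden case the unpenalized zero coordinates form an empty set (all $n-k$ zero coordinates are penalized, so the right side is $\sum_{i=1}^{n-k}|\w_i|$), whereas here the $(1-\eta)k$ coordinates in $Z_2$ are unpenalized even though $\x$ vanishes there, so they drop out of the competing objective entirely and the sum on the right shrinks to $\sum_{i=1}^{n-k-(1-\eta)k}|\w_i|$. The penalized-nonzero block is the same $\{n-k+1,\dots,n-\eta k\}$ in both lemmas because the known support $\Pi$ is the same. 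So the proof is: reduce to null-space vectors, expand the two objectives block by block, apply the sign constraint $\x_i\le 0$, and read off the inequality; then cite the existing lower-bound construction for the ``moreover'' half.

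The main obstacle — such as it is — is bookkeeping rather than mathematics: one must be careful that the unpenalized zero block $Z_2$ genuinely contributes nothing to the competitor's objective (it does, since $\x_i=0$ there and those indices are excluded from the sum), and that in the strict-optimality argument the triangle-inequality slack on the penalized nonzero block is handled in the right direction, i.e. that $\x_i+\w_i$ can be assumed $\le 0$ in the binding case so that $|\x_i+\w_i|=|\x_i|-\w_i$ exactly (if $\x_i+\w_i>0$ the inequality only helps). A secondary point is making sure the constructed counterexample $\x$ in the ``moreover'' part actually has exactly $k$ nonzero entries with the prescribed support structure and that its nonzero entries can be taken strictly negative; this is precisely why the argument is quoted from \cite{StojnicUpper10,StojnicGorEx10} rather than reproduced. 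Everything else follows directly from the corresponding results in \cite{StojnicICASSP09,StojnicCSetam09,StojnicUpper10}.
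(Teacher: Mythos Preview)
Your proposal is correct and follows exactly the approach the paper intends: the paper's own proof is a one-line citation (``Follows directly from the corresponding results in \cite{StojnicICASSP09,StojnicCSetam09,StojnicUpper10}''), and your write-up simply unpacks that citation by reducing to null-space perturbations $\w$, comparing the objective block by block with the only change from Lemma~\ref{lemma:lemmaknownsuppcond} being that the unpenalized zero block $Z_2$ drops out, and then invoking the standard construction for the ``moreover'' half. If anything, you have supplied more detail than the paper does.
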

\begin{proof}
Follows directly from the corresponding results in \cite{StojnicICASSP09,StojnicCSetam09,StojnicUpper10}.
\end{proof}
Having matrix $A$ such that
(\ref{eq:thmeqgen}) holds would be enough for solutions of (\ref{eq:l1imphidden}) and (\ref{eq:l0}) (or (\ref{eq:system})) to coincide. As mentioned in the previous section, if one
assumes that $m$ and $k$ are proportional to $n$ (the case of our
interest in this paper) then the construction of the deterministic
matrices $A$ that would satisfy
(\ref{eq:thmeqgen}) is not an easy task (in fact, one may say that together with the ones that correspond to the standard $\ell_1$ it is one of the most fundamental open problems in the area of theoretical compressed sensing). However, to simplify things we will again turn to random matrices.  In fact to be a bit more specific, we will again assume that the elements of matrix $A$ are i.i.d. standard normal random variables. As in the previous section, such an assumption is not really much of a restriction when it comes to generality of the presented results (as mentioned earlier, we will briefly revisit this in Section \ref{sec:conc}). However, to make the presentation as smooth as possible we assume the standard Gaussian scenario.

We then follow the strategy of the previous section (and ultimately the one from \cite{StojnicCSetam09}). To do so, we will make use of Theorem \ref{thm:Gordonmesh}. We start by defining a set $\Swh$
\begin{equation}
\Swh=\{\w\in S^{n-1}| \quad \sum_{i=n-k+1}^{n-\eta k} \w_i<\sum_{i=1}^{n-k-(1-\eta)k}|\w_{i}|\},\label{eq:defSwh}
\end{equation}
where $S^{n-1}$ is the unit sphere in $R^n$. Following what was done in the previous section then effectively means that if one can handle $w(\Swh)$ then, when $n$ is large one can, roughly speaking, use the condition $w(\Swh)< \sqrt{m}$ to obtain an attainable lower bound $\beta_{w}^{(h)}$ for any given $0<\alpha\leq 1$.

To that end we then look at
\begin{equation}
w(S_{w}^{(p)})=E\max_{\w\in S_{w}^{(p)}} (\h^T\w). \label{eq:widthdefswh}
\end{equation}
Following further what was done in Section \ref{sec:analknownsupp} (and earlier in \cite{StojnicCSetam09,StojnicICASSP10knownsupp}) one then can write
\begin{eqnarray}
w(\Swh) & = & E\max_{\w\in \Swh} (\h^T\w)=E\max_{\w\in \Swh} (\sum_{i=1}^{n-k} |\h_i\w_i|+\sum_{i=n-k+1}^{n}\h_i\w_i) \nonumber \\
& = & E\max_{\w\in \Swh} (\sum_{i=1}^{n-k} |\h_i||\w_i|+\sum_{i=n-k+1}^{n}\h_i\w_i).\label{eq:workww0hidden}
\end{eqnarray}
Let $\h_{1:(n-k-(1-\eta)k)}=(\h_1,\h_2,\dots,\h_{n-k-(1-\eta)k})^T$. Further, let now
$|\h|_{(i)}^{(n-k-(1-\eta)k)}$ be the $i$-th smallest magnitude of elements of $\h_{1:(n-k-(1-\eta)k)}$. Set
\begin{equation*}
\hwh=(|\h|_{(1)}^{(n-k-(1-\eta)k)},|\h|_{(2)}^{(n-k-(1-\eta)k)},\dots,|\h|_{(n-k-(1-\eta)k)}^{(n-k-(1-\eta)k)},
\h_{n-k-(1-\eta)k+1},\h_{n-k-(1-\eta)k+2},\dots,\h_{n})^T.
\end{equation*}
Then one can simplify (\ref{eq:workww0hidden}) in the following way
\begin{eqnarray}
w(\Swh) = E\max_{\t\in R^n} & &  \hw^T\t \nonumber \\
\mbox{subject to} &  & \t_i\geq 0, 0\leq i\leq (n-k)\nonumber \\
& & \sum_{i=n-k+1}^{n-\eta k}\t_i\geq \sum_{i=1}^{n-k-(1-\eta)k} \t_i \nonumber \\
& & \sum_{i=1}^n\t_i^2\leq 1.\label{eq:workww1hidden}
\end{eqnarray}
Let $\z\in R^n$ be a column vector such that $\z_i=1,1\leq i\leq (n-k)$, $\z_i=-1,n-k+1\leq i\leq n-\eta k$, and $\z_i=0,n-k-(1-\eta) k+1\leq i\leq n-k$ and $n-\eta k+1\leq i\leq n$. Following step by step the derivation in \cite{StojnicCSetam09} one has, based on the Lagrange duality theory, that there is a $\cweak=(1-\theta_w)n\leq (n-k)$ such that
\begin{multline}
\lim_{n\rightarrow \infty} \frac{w(\Swh)}{\sqrt{n}}=\lim_{n\rightarrow \infty} \frac{E\max_{\w\in S_{w}^{(p)}} (\h^T\w)}{\sqrt{n}}\approxeq
\sqrt{\lim_{n\rightarrow \infty} \frac{E\sum_{i=\cweak+1}^n\hwh_i^2}{n}-\frac{(\lim_{n\rightarrow \infty} \frac{E(\hwh^T\z)-E\sum_{i=1}^{\cweak}\hwh_i}{n})^2}{1-\lim_{n\rightarrow \infty}\frac{\cweak}{n}-\eta \lim_{n\rightarrow \infty}\frac{k}{n}}}.\\\label{eq:wwphidden}
\end{multline}
where $\hwh_i$ is the $i$-th element of vector $\hwh$. Moreover, one has the following counterpart to (\ref{eq:condcwp}) which also establishes the way to determine a critical $\cweak$. Roughly speaking one has the following identity
\begin{equation}
\frac{(\lim_{n\rightarrow \infty} \frac{E(\hwh^T\z)-E\sum_{i=1}^{\cweak}\hwh_i}{n})^2}{1-\lim_{n\rightarrow \infty}\frac{\cweak}{n}-\eta \lim_{n\rightarrow \infty}\frac{k}{n}}\approxeq \lim_{n\rightarrow \infty} \frac{E\hwh_{\cweak}}{n}.\label{eq:condcwphidden}
\end{equation}
Using further the technique of \cite{StojnicCSetam09} one can actually explicitly characterize $w(\Swh)$ in (\ref{eq:wwphidden}) in the following way:
\begin{multline}
\hspace{-.5in}\left ( \lim_{n\rightarrow \infty} \frac{w(\Swh)}{\sqrt{n}} \right )^{2}=\left ( \lim_{n\rightarrow \infty} \frac{E\max_{\w\in S_{w}^{(p)}} (\h^T\w)}{\sqrt{n}} \right )^{2}\\\approxeq
 \frac{1-2\beta_{w}^{(h)}+\eta\beta_{w}^{(h)}}{\sqrt{2\pi}}\left (\sqrt{2\pi}+2\frac{\sqrt{2(\erfinv(\frac{1-\htheta_w-(1-\eta)\beta_{w}^{(h)}}{1-2\beta_{w}^{(h)}+\eta\beta_{w}^{(h)}}))^2}}
 {e^{(\erfinv(\frac{1-\htheta_w-(1-\eta)\beta_{w}^{(h)}}
 {1-2\beta_{w}^{(h)}+\eta\beta_{w}^{(h)}}))^2}}-\sqrt{2\pi}
\frac{1-\htheta_w-(1-\eta)\beta_{w}^{(h)}}{1-2\beta_{w}^{(h)}+\eta\beta_{w}^{(h)}}\right )\\+2\beta_{w}^{(h)}-\eta\beta_{w}^{(h)}
-\frac{\left ((1-2\beta_{w}^{(h)}+\eta\beta_{w}^{(h)})\sqrt{\frac{2}{\pi}}
e^{-(\erfinv(\frac{1-\hat{\theta}_w-(1-\eta)\beta_{w}^{(h)}}{1-2\beta_{w}^{(h)}+\eta\beta_{w}^{(h)}}))^2}\right )^2}{\hat{\theta}_w-\eta\beta_{w}^{(h)}},\label{eq:wwp1hidden}
\end{multline}
where $\htheta_w$ is the solution of
\begin{equation}
\frac{\left ((1-2\beta_{w}^{(h)}+\eta\beta_{w}^{(h)})\sqrt{\frac{2}{\pi}}e^{-(\erfinv(\frac{1-\theta_w-(1-\eta)\beta_{w}^{(h)}}{1-2\beta_{w}^{(h)}+\eta\beta_{w}^{(h)}}))^2}\right )}{\theta_w-\eta\beta_{w}^{(h)}}\approxeq
\sqrt{2}\mbox{erfinv}\left (\frac{1-\theta_w-(1-\eta)\beta_{w}^{(h)}}{1-2\beta_{w}^{(h)}+\eta\beta_{w}^{(h)}}\right ).\label{eq:condcwp1}
\end{equation}

We summarize the above results in the following theorem.

\begin{theorem}(Location of fraction $\eta$ of non-zero elements is known to be hidden with a set $\kappa$)
Let $A$ be an $m\times n$ measurement matrix in (\ref{eq:system})
with the null-space uniformly distributed in the Grassmanian. Let
the unknown $\x$ in (\ref{eq:system}) be $k$-sparse. Further, let the location and signs of nonzero elements of $\x$ be arbitrarily chosen but fixed. Moreover, let the set of nonzero locations of $\x$ be $K$. Let $\kappa\subset \{1,2,\dots,n\}$ be a given set of cardinality $k$ such that the cardinality of set $K\cap \kappa$ is $\eta k$. Let $k,m,n$ be large
and let $\alpha=\frac{m}{n}$, $\beta_w^{(h)}=\frac{k}{n}$, and $\eta$ be constants
independent of $m$, $n$, and $k$. Let $\erfinv$ be the inverse of the standard error function associated with zero-mean unit variance Gaussian random variable.  Further,
let $\htheta_w$, ($\beta_w^{(h)}\leq \htheta_w\leq 1-\beta_w^{(h)}+\eta\beta_w^{(h)}$) be the solution of
\begin{equation}
(1-2\beta_w^{(h)}+\eta\beta_w^{(h)})\frac{\sqrt{\frac{2}{\pi}}e^{-(\erfinv(\frac{1-\theta_w-(1-\eta)
\beta_w^{(h)}}{1-2\beta_w+\eta\beta_w^{(h)}}))^2}}{\theta_w-\eta\beta_w^{(h)}}\approxeq\sqrt{2}\erfinv (\frac{1-\theta_w-(1-\eta)\beta_w^{(h)}}{1-2\beta_w^{(h)}+\eta\beta_w^{(h)}})=0.\label{eq:thmweakthetahidden}
\end{equation}

\noindent If $\alpha$ and $\beta_w^{(h)}$ further satisfy
\begin{multline}
\alpha>\frac{1-2\beta_w^{(h)}+\eta\beta_w^{(h)}}{\sqrt{2\pi}}\left (\sqrt{2\pi}+2\frac{\sqrt{2(\erfinv(\frac{1-\htheta_w-(1-\eta)\beta_w^{(h)}}{1-2\beta_w^{(h)}+\eta\beta_w^{(h)}}))^2}}{e^{(\erfinv(\frac{1-\htheta_w-(1-\eta)\beta_w^{(h)}}
{1-2\beta_w^{(h)}+\eta\beta_w^{(h)}}))^2}}
-\sqrt{2\pi}
\frac{1-\htheta_w-(1-\eta)\beta_w^{(h)}}{1-2\beta_w^{(h)}+\eta\beta_w^{(h)}}\right )\\+2\beta_w^{(h)}-\eta\beta_w^{(h)}
-\frac{\left ((1-2\beta_w^{(h)}+\eta\beta_w^{(h)})\sqrt{\frac{2}{\pi}}e^{-(\erfinv(\frac{1-\hat{\theta}_w-(1-\eta)\beta_w^{(h)}}{1-2\beta_w^{(h)}+\eta\beta_w^{(h)}}))^2}\right )^2}{\hat{\theta}_w-\eta\beta_w^{(h)}}\label{eq:thmweakalphahidden}
\end{multline}
then the solution of (\ref{eq:l1imphidden}) and the $k$-sparse solution of (\ref{eq:system}) coincide with overwhelming
probability.

\noindent If $\alpha$ and $\beta_w^{(h)}$ further satisfy
\begin{multline}
\alpha<\frac{1-2\beta_w^{(h)}+\eta\beta_w^{(h)}}{\sqrt{2\pi}}\left (\sqrt{2\pi}+2\frac{\sqrt{2(\erfinv(\frac{1-\htheta_w-(1-\eta)\beta_w^{(h)}}{1-2\beta_w^{(h)}+\eta\beta_w^{(h)}}))^2}}{e^{(\erfinv(\frac{1-\htheta_w-(1-\eta)\beta_w^{(h)}}
{1-2\beta_w^{(h)}+\eta\beta_w^{(h)}}))^2}}
-\sqrt{2\pi}
\frac{1-\htheta_w-(1-\eta)\beta_w^{(h)}}{1-2\beta_w^{(h)}+\eta\beta_w^{(h)}}\right )\\+2\beta_w^{(h)}-\eta\beta_w^{(h)}
-\frac{\left ((1-2\beta_w^{(h)}+\eta\beta_w^{(h)})\sqrt{\frac{2}{\pi}}e^{-(\erfinv(\frac{1-\hat{\theta}_w-(1-\eta)\beta_w^{(h)}}{1-2\beta_w^{(h)}+\eta\beta_w^{(h)}}))^2}\right )^2}{\hat{\theta}_w-\eta\beta_w^{(h)}}\label{eq:thmweakalphahidden1}
\end{multline}
then with overwhelming probability there will be a $k$-sparse $\x$ (from a set of $\x$'s with fixed locations and signs of nonzero components) that satisfies (\ref{eq:system}) and is \textbf{not} the solution of (\ref{eq:l1imphidden}).
\label{thm:thmweakthrknownsupphidden}
\end{theorem}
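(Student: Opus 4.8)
The plan is to mirror exactly the argument used for Theorem~\ref{thm:thmweakthrknownsupp}, substituting the ``hidden'' null-space condition of Lemma~\ref{lemma:lemmaknownsuppcondhidden} for the one of Lemma~\ref{lemma:lemmaknownsuppcond}. For the first (achievability) part, I would start from the reformulation already carried out in (\ref{eq:defSwh})--(\ref{eq:workww1hidden}): the relevant Gaussian width is $w(\Swh)$, and by the Escape-through-a-mesh theorem (Theorem~\ref{thm:Gordonmesh}) it suffices to show that $w(\Swh)<\sqrt{m}-\frac{1}{4\sqrt{m}}$ with overwhelming probability whenever $\alpha$ satisfies (\ref{eq:thmweakalphahidden}). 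The key observation, which makes the reduction to the previous section work, is that after sorting the first $n-k-(1-\eta)k$ magnitudes of $\h$ and keeping the last $k+(1-\eta)k$ coordinates as they are, the optimization (\ref{eq:workww1hidden}) has the \emph{same structure} as (\ref{eq:workww1}) but with the role of ``$n-k$ zero coordinates'' played by $n-k-(1-\eta)k$ coordinates and the constraint $\sum_{i=n-k+1}^{n-\eta k}\t_i\ge\sum_{i=1}^{n-k-(1-\eta)k}\t_i$ still being a single linear inequality with a $\pm 1$ gradient vector $\z$. Hence the Lagrange-duality computation of \cite{StojnicCSetam09} applies verbatim, giving (\ref{eq:wwphidden}) with threshold equation (\ref{eq:condcwphidden}) for the critical breakpoint $\cweak=(1-\theta_w)n$.

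Next I would carry out the explicit evaluation of the three ingredients appearing in (\ref{eq:wwphidden}): $\lim \frac{1}{n}E\sum_{i=\cweak+1}^n\hwh_i^2$, $\lim\frac{1}{n}(E(\hwh^T\z)-E\sum_{i=1}^{\cweak}\hwh_i)$, and the denominator $1-\lim\frac{\cweak}{n}-\eta\lim\frac{k}{n}$. The only bookkeeping difference from Section~\ref{sec:analknownsupp} is that the number of ``order-statistic'' coordinates is now $(1-2\beta_w^{(h)}+\eta\beta_w^{(h)})n$ rather than $(1-\beta_w^{(p)})n$, and the number of free sign-$(-1)$ coordinates in $\z$ is $(1-\eta)\beta_w^{(h)}n$; propagating these two changes through the half-Gaussian order-statistic integrals (which are the same $\erfinv$-type integrals evaluated in \cite{StojnicCSetam09}) produces the nested fraction $\frac{1-\htheta_w-(1-\eta)\beta_w^{(h)}}{1-2\beta_w^{(h)}+\eta\beta_w^{(h)}}$ appearing everywhere in (\ref{eq:wwp1hidden}), (\ref{eq:condcwp1}) and the theorem statement. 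Squaring the limiting width gives precisely the right-hand side of (\ref{eq:thmweakalphahidden}), so the strict inequality there is exactly $w(\Swh)^2<m$ in the large-$n$ limit; concentration of $w(\Swh)$ around its mean (again as in \cite{StojnicCSetam09}, via Gaussian Lipschitz concentration) upgrades this to overwhelming probability and lets Theorem~\ref{thm:Gordonmesh} conclude $Y\cap\Swh=\emptyset$, which by the first part of Lemma~\ref{lemma:lemmaknownsuppcondhidden} is the desired recovery statement.

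For the converse (the ``not the solution'' part, (\ref{eq:thmweakalphahidden1})), I would invoke the complementary machinery of \cite{StojnicUpper10,StojnicGorEx10}: the moreover part of Lemma~\ref{lemma:lemmaknownsuppcondhidden} reduces failure to the existence of $\w$ in the null space with $\sum_{i=n-k+1}^{n-\eta k}\w_i\ge\sum_{i=1}^{n-k-(1-\eta)k}|\w_i|$, i.e.\ to $Y\cap \overline{\Swh}\neq\emptyset$ for the complementary set, and the matching lower bound on the relevant width (the Gordon-type lower escape estimate) shows this happens with overwhelming probability when $\alpha$ is below the same threshold expression. Because the width computation is an equality in the limit, the same closed-form quantity governs both directions, so no new calculation is needed beyond noting that the lower-bound side of \cite{StojnicUpper10} is structurally identical after the $(n-k)\to(n-k-(1-\eta)k)$ substitution.

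The main obstacle is the second step: verifying that the Lagrange-duality reduction of \cite{StojnicCSetam09} really does go through \emph{unchanged} despite the altered index ranges. Two points need care. First, one must check that the optimal breakpoint $\cweak$ still satisfies $\cweak\le n-k-(1-\eta)k$ (not merely $\cweak\le n-k$ as loosely written in (\ref{eq:workww1hidden})), so that the thresholded coordinates are genuinely among the order statistics and the sign pattern of $\z$ used in the duality argument is the correct one; this is what pins down the admissible range $\beta_w^{(h)}\le\htheta_w\le 1-\beta_w^{(h)}+\eta\beta_w^{(h)}$ in the theorem. Second, one must confirm that the single active linear constraint $\sum\t_i\ge\sum\t_i$ together with the non-negativity constraints yields the \emph{same} KKT structure — one scalar multiplier for the inequality, a water-filling-type solution for the remaining coordinates — so that the ``$\approxeq$'' identity (\ref{eq:condcwphidden}) characterizing $\cweak$ is legitimate; everything downstream (the $\erfinv$ algebra) is then routine. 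Once these two consistency checks are in place, the proof is a transcription of the Section~\ref{sec:analknownsupp} argument with the obvious parameter substitution, exactly as indicated by the author's one-line ``follows from \cite{StojnicCSetam09}'' style of the earlier proofs.
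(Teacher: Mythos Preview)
Your proposal is correct and follows essentially the same approach as the paper: the achievability part is obtained by combining Lemma~\ref{lemma:lemmaknownsuppcondhidden} with the Gaussian-width computation (\ref{eq:defSwh})--(\ref{eq:condcwp1}) via Theorem~\ref{thm:Gordonmesh} and the Lagrange-duality machinery of \cite{StojnicCSetam09}, while the converse is handled by the moreover part of Lemma~\ref{lemma:lemmaknownsuppcondhidden} together with \cite{StojnicUpper10,StojnicGorEx10}. Your explicit identification of the bookkeeping substitution $(1-\beta_w^{(p)})\to(1-2\beta_w^{(h)}+\eta\beta_w^{(h)})$ and the two consistency checks (the range of $\cweak$ and the KKT structure) is exactly the content the paper leaves implicit in its one-line reference to the preceding discussion.
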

\begin{proof}
The first part follows from the above discussion as well as from the considerations presented in Section \ref{sec:analknownsupp} and earlier in \cite{StojnicICASSP10knownsupp,StojnicCSetam09}. The moreover part follows by a combination of the moreover part of Lemma \ref{lemma:lemmaknownsuppcondhidden} and the considerations presented in \cite{StojnicUpper10,StojnicGorEx10}.
\end{proof}

\noindent \textbf{Remark:} As in Section \ref{sec:knownsupp} to make writing easier in the previous theorem we removed all $\epsilon$'s used in Theorem \ref{thm:thmweakthr}.

In a more informal language one then has the following interpretation of the above theorem. Assume the setup of the above theorem. Let $\alpha_w^{(h)}$ and $\beta_{w}^{(h)}$ satisfy the following:

\noindent \underline{\underline{\textbf{Fundamental characterization of the \emph{hidden} partial $\ell_1$ performance:}}}

\begin{center}
\shadowbox{$
(1-2\beta_{w}^{(h)}+\eta\beta_{w}^{(h)})\frac{\sqrt{\frac{2}{\pi}}e^{-(\erfinv(\frac{1-\alpha_w^{(h)}}{1-2\beta_{w}^{(h)}+\eta\beta_{w}^{(h)}}))^2}}
{\alpha_w^{(h)}-\beta_{w}^{(h)}}-\sqrt{2}\erfinv (\frac{1-\alpha_w^{(h)}}{1-2\beta_{w}^{(h)}+\eta\beta_{w}^{(h)}})=0.
$}
-\vspace{-.5in}\begin{equation}
\label{eq:thmweakinfknownsupp}
\end{equation}
\end{center}

Then:
\begin{enumerate}
\item If $\alpha>\alpha_w^{(h)}$ then with overwhelming probability the solution of (\ref{eq:l1imphidden}) is the $k$-sparse $\x$ from (\ref{eq:system}).
\item If $\alpha<\alpha_w^{(h)}$ then with overwhelming probability there will be a $k$-sparse $\x$ (from a set of $\x$'s with fixed locations and signs of nonzero components) that satisfies (\ref{eq:system}) and is \textbf{not} the solution of (\ref{eq:l1imphidden}).
    \end{enumerate}

The above theorem essentially settles typical behavior of the \emph{hidden} partial $\ell_1$ optimization from (\ref{eq:l1imphidden}) when used for solving (\ref{eq:system}) or (\ref{eq:l0}) assuming that a fraction $\eta$ of nonzero locations of $\x$ is a priori known to be within a set $\kappa$of cardinality $k$.

The results for the weak threshold obtained from the above theorem
are presented in Figure \ref{fig:weakknownsupphidden}. As $\eta$ increases more knowledge about $\x$ is available and one expects that the threshold values of the recoverable sparsity should be higher. As results presented in Figure \ref{fig:weakknownsupphidden} indicate, the values of the threshold recoverable by the modified hidden partial $\ell_1$ optimization from (\ref{eq:l1imphidden}) are indeed higher as $\eta$ increases.

Carefully looking at the results presented in Figure \ref{fig:weakknownsupphidden} one can note that performing optimization from (\ref{eq:l1imphidden}) is beneficial when compared to $\ell_1$ only in the regime above the dashed blue curve. Also for $\eta=0.75$ the improvement seems quite substantial. For example, if one can locate a set $\kappa$ of cardinality $k$ that contains $75\%$ of the support of $\x$ then one could provably substantially improve performance of $\ell_1$ from (\ref{eq:system}). What is even more interesting is that many of algorithms designed to solve (\ref{eq:system}) usually work in such a way that even when they fail to recover the entire support they still recover correctly significant portion of the support. Where these algorithms usually fail is inability to locate where, within the offered incorrect estimate of the entire support, that portion is. The results presented in Figures \ref{fig:weakknownsupp} and \ref{fig:weakknownsupphidden} provide a solid intuitive justification as to why such algorithms may still have a chance to outperform $\ell_1$.

In addition to the theoretical results one can obtain using Theorem \ref{thm:thmweakthrknownsupphidden} we on the right side of Figure \ref{fig:weakknownsupp} show experimental results one can obtain through numerical simulations. We discuss these in a bit more detail below.
\begin{figure}[htb]
\begin{minipage}[b]{0.5\linewidth}
\centering
\centerline{\epsfig{figure=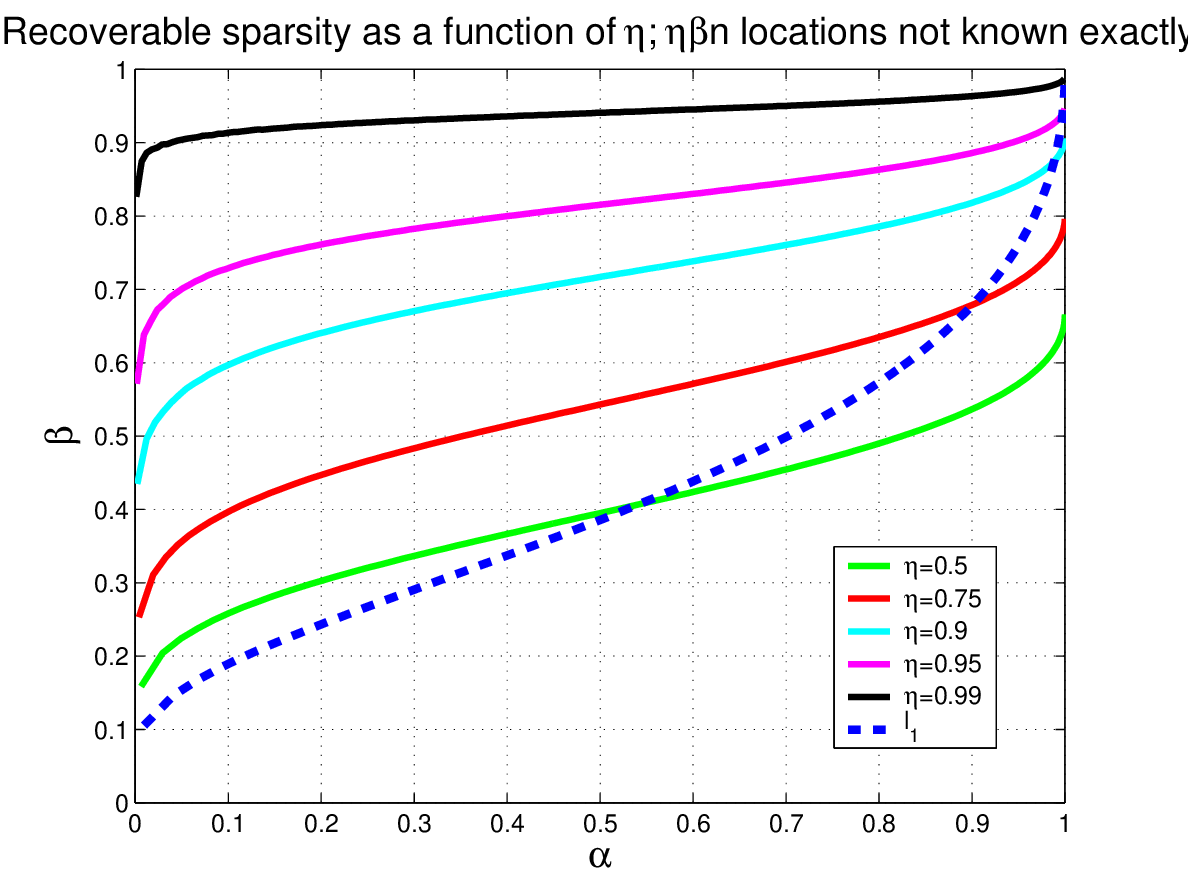,width=8cm,height=6cm}}
\end{minipage}
\begin{minipage}[b]{0.5\linewidth}
\centering
\centerline{\epsfig{figure=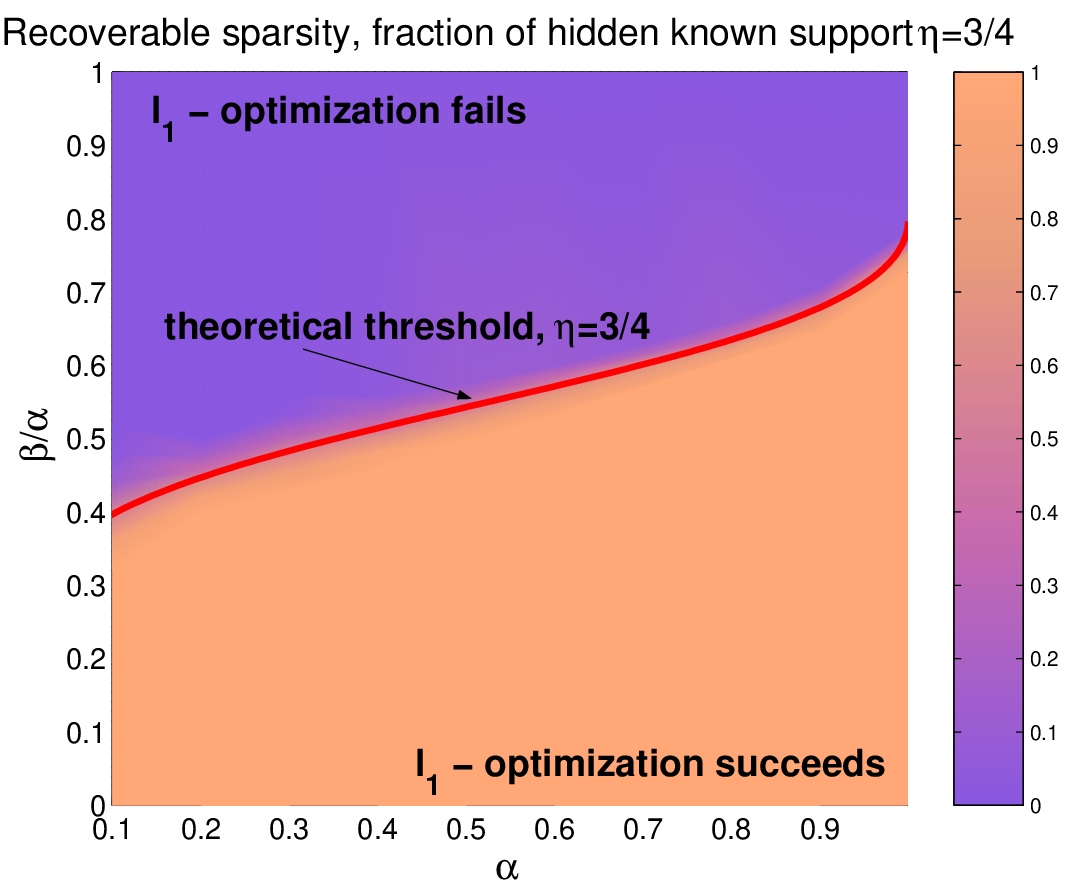,width=8cm,height=6cm}}
\end{minipage}
\caption{Left: Theoretical \emph{weak} threshold as a function of fraction of \emph{hidden} known support; Right: Experimentally recoverable sparsity; fraction of \emph{hidden} known support $\eta=\frac{3}{4}$}
\label{fig:weakknownsupphidden}
\end{figure}

\subsection{Numerical experiments}
\label{sec:simulh}

In this section we briefly discuss the results that we obtained from numerical experiments. In our numerical experiments we selected $n=1000$ when $\alpha\leq 0.2$ (to obtain a finer resolution) and $n=500$ when $\alpha>0.2$. In all experiments we assumed $\eta=0.75$. We then generated matrices $A$ of size $m\times n$ with $m=(0.1n,0.2n,0.3n,\dots,0.9n,0.99n)$. The components of the measurement matrices $A$ were generated as i.i.d. zero-mean unit variance Gaussian random variables.
As in Section \ref{sec:knownsupp}, for each $m$ we generated $k$-sparse signals $\x$ for several different values of $k$ from the transition zone (the locations of non-zero elements of $\x$ were chosen randomly and three quarters or $\eta$ of them that correspond to the known part, i.e. to $\Pi$, were chosen randomly as well; the remaining components of $\kappa$ were chosen randomly as well). For each combination $(k,m)$ we generated $100$ different problem instances and recorded the number of times the hidden partial $\ell_1$-optimization algorithm from (\ref{eq:l1imphidden}) failed to recover the correct $k$-sparse $\x$.
The obtained data are then interpolated and graphically presented on the right hand side of Figure \ref{fig:weakknownsupphidden}. As in Section \ref{sec:knownsupp}, the color of any point shows the probability of having the hidden partial $\ell_1$-optimization from (\ref{eq:l1imphidden}) succeed for a combination $(\alpha,\beta)$ that corresponds to that point. The colors are again mapped to probabilities according to the scale on the right hand side of the figure.
The simulated results can naturally be compared to the theoretical prediction from Theorem \ref{thm:thmweakthrknownsupphidden}. Hence, we also on the right hand side plot the theoretical value for the threshold calculated according to Theorem \ref{thm:thmweakthrknownsupphidden} (obviously these are shown on the left hand side of the figure as well). We again observe that the simulation results are in a good agreement with the theoretical calculation.

\section{Discussion}
\label{sec:discussion}

In this section we briefly look at the presented results and how they fit into a larger scope, especially within the framework presented in \cite{StojnicReDirChall13}. Namely, in \cite{StojnicReDirChall13}, we observed that since the original work of Donoho \cite{DonohoUnsigned,DonohoPol} appeared almost $10$ years ago not much changed in the location of the achievable recoverable thresholds. Of course we quickly pointed out that not much can be changed since Donoho actually determined the performance characterization of the $\ell_1$-optimization. However, what was really emphasized in \cite{StojnicReDirChall13} is that there has not been alternative characterizations that go above the one presented for $\eta=0$ in Figure \ref{fig:weakknownsupp} (of course assuming that they are obtained through an analysis of a polynomial algorithm). Now, looking at what we presented in this paper (for example in the very same Figure \ref{fig:weakknownsupp} or alternatively in Figure \ref{fig:weakknownsupphidden}) as well as what was presented in many other papers either experimentally or theoretically (see, e.g. \cite{StojnicICASSP10knownsupp,CWBreweighted,SChretien08,SaZh08,StojnicCSetamBlock09,StojnicUpperBlock10,StojnicISIT2010binary}), one may be tempted to object such a statement. The reason of course could be that some of the plots in Figures \ref{fig:weakknownsupp} and \ref{fig:weakknownsupphidden} are higher than the curve $\eta=0$ in Figure \ref{fig:weakknownsupp}. However, as mentioned in \cite{StojnicReDirChall13}, while there are scenarios where the characterizations can be lifted, it is not clear to us if one could consider any of such lifts as a ``universal" lift of the characterization $\eta=0$ in Figure \ref{fig:weakknownsupp}. While more on our understanding of a ``universal" lift can be found in \cite{StojnicReDirChall13}, here we just briefly recall on a question we posed in \cite{StojnicReDirChall13} relying on such an understanding:

\textbf{Question 1}:
Let $A$ be an $\alpha n\times n$ matrix with i.i.d standard normal components. Let $\tilde{\x}$ be a $\beta n$ -sparse $n$-dimensional vector from $R^n$ and let the signs and locations of its non-zero components be arbitrarily chosen but fixed. Moreover, let pair $(\beta,\alpha)$ reside in the area above the curve given for $\eta=0$ in Figure \ref{fig:weakknownsupp}. Can one then design a polynomial algorithm that would with overwhelming probability (taken over randomness of $A$) solve (\ref{eq:system}) for all such $\tilde{\x}$?

Without going into the details about possible deficiencies in the formulation of the above question (these are to a large extent discussed in \cite{StojnicReDirChall13}) we here only briefly discuss what kind of consequences the results presented in this paper have on it. First, as already mentioned above, just by looking at plots in Figures \ref{fig:weakknownsupp} and \ref{fig:weakknownsupphidden} one immediately may wonder isn't the answer to Question $1$ yes. The fact that all of the curves presented in Figure \ref{fig:weakknownsupp} and some of the curves presented in Figure \ref{fig:weakknownsupphidden} are indeed well above $\eta=0$ curve is actually not enough to conclude that the answer to the above question is yes. One should keep in mind that whenever $\eta>0$ one essentially uses an extra amount of knowledge about $\x$ which is a luxury that the original problem (\ref{eq:l0}) does not have. Still, the results presented in Figures \ref{fig:weakknownsupp} and \ref{fig:weakknownsupphidden} provide in a way the following useful information: namely, if one can determine in some way a certain fraction of $supp(\x)$ then the answer to Question $1$ could be yes. One has to be careful though, because such a set of locations has to be pretty much random with respect to the original $supp(\x)$. This is of course the key obstacle why the results presented in Sections \ref{sec:knownsupp} and \ref{sec:hidknownsupp} may not be enough to resolve Question $1$ in positive. What typically happens when any of the iterative algorithms are employed is roughly the following: one can often correctly guess a fairly large fraction of $supp(\x)$ even in the range $(\alpha,\beta)$ above the fundamental $\ell_1$ performance characterization. The problem is that such a guess is almost always in a way biased, i.e. it does not contain a random fraction of $supp(\x)$ but rather a carefully selected fraction of $supp(\x)$. That of course is not enough to utilize the machinery presented in this paper. Still, we believe that the results we presented in this paper in a way simplify what is sufficient to be done if one is to resolve Question $1$ in positive. Of course, if the answer to Question $1$ is no then such a simplification may not be of very much use.

We should also mention that the questions we posed in \cite{StojnicReDirChall13} are purely mathematical. If they could be resolved in positive then they would have significant practical implications as well. However if they can not, we believe that their importance is purely on a theoretical level. On the other hand if one ignores the mathematical frame from \cite{StojnicReDirChall13} and views the results presented in this paper in a practical context then their value seems quite significant. For example, in many practical situations one may be able to have an available feedback about $supp(\x)$. While availability of such a feedback precludes a fair comparison between the curves in Figures \ref{fig:weakknownsupp} and \ref{fig:weakknownsupphidden} (and consequently their direct use in providing any definite answer to Question $1$) the results presented in this paper are very useful as they characterize performances of relatively simple $\ell_1$ modifications given in (\ref{eq:l1imp}) and (\ref{eq:l1imphidden}).

\section{Conclusion}
\label{sec:conc}

In this paper we looked at possible modifications of standard $\ell_1$ optimization when used for recovering sparse solutions of under-determined systems of linear equations. More specifically, we considered two modifications that can be used in scenarios when some kind of information about the support of unknown vector is a priori available. First, we considered scenario which assumes that a fraction of the support of unknown vector is known and then we looked at the scenario which assumes that a given set of locations contains a fraction of the support of unknown vectors. For both of these modifications, in a statistical context we provided a precise characterization of systems dimensions for which they successfully find the sparsest solution of the system.

As was the case in \cite{StojnicCSetam09,StojnicMoreSophHopBnds10,StojnicLiftStrSec13}, the purely theoretical results we presented in this paper are valid for the so-called Gaussian models, i.e. for systems with i.i.d. Gaussian coefficients. Such an assumption significantly simplified our exposition. However, all results that we presented can easily be extended to the case of many other models of randomness. There are many ways how this can be done. Instead of recalling on them here we refer to a brief discussion about it that we presented in \cite{StojnicMoreSophHopBnds10}.

As for usefulness of the presented results, there is hardly any limit. One can look at a host of related problems from the compressed sensing literature. These include for example, all noisy variations, approximately sparse unknown vectors, vectors with a priori known structure (block-sparse, binary/box constrained etc.), all types of low rank matrix recoveries, various other algorithms like $\ell_q$-optimization, SOCP's, LASSO's, and many, many others. Each of these problems has its own specificities and adapting the methodology presented here usually takes a bit of work but in our view is now a routine. While we will present some of these applications we should emphasize that their contribution will be purely on an application level.

\begin{singlespace}
\bibliographystyle{plain}
\bibliography{TowBettCompSensRefs}
\end{singlespace}

\end{document}